\pgfplotsset{width=9cm,compat=1.5.1}
\definecolor{myblue}{RGB}{0,0,128}
\definecolor{myblue2}{RGB}{0,32,96}
\definecolor{myblue3}{RGB}{0,64,64}
\definecolor{myblue4}{RGB}{0,96,32}
\definecolor{myblue5}{RGB}{0,128,0}
\definecolor{myblue6}{RGB}{32,96,0}
\definecolor{myblue7}{RGB}{64,64,0}
\definecolor{myblue8}{RGB}{96,32,0}
\definecolor{myblue9}{RGB}{128,0,0}
\newtheorem{theorem}{Theorem}[section]
\newtheorem{definition}[theorem]{Definition}
\newtheorem{lemma}[theorem]{Lemma}
\newtheorem{corollary}[theorem]{Corollary}
\newtheorem{remark}[theorem]{Remark}
\newtheorem{proposition}[theorem]{Proposition}
\newtheorem{example}[theorem]{Example} 
\newcommand{\tr}{\operatorname{Tr}}
\newcommand{\bra}[1]{\langle #1 |}
\newcommand{\ket}[1]{| #1 \rangle}
\newcommand{\braket}[2]{\langle #1 | #2 \rangle}
\newcommand{\ketbra}[2]{| #1 \rangle\langle #2 |}
\newcommand\iprod[2]{\ensuremath{\langle#1|#2\rangle}}
\newcommand\ip[2]{\ensuremath{\langle#1|#2\rangle}}
\newcommand{\defeq}{\stackrel{\smash{\textnormal{\tiny def}}}{=}}
\newcommand{\kb}[1]{\ket{#1} \bra{#1}}
\def\C{\mathbb{C}}
\def\R{\mathbb{R}}
\def\Pos{\operatorname{Pos}}
\def\Herm{\operatorname{Herm}}
\def\Unitary{\operatorname{U}}
\begin{document}

\title{Tight bounds for antidistinguishability and circulant sets of pure quantum states}

\author{Nathaniel Johnston} 
\affiliation{Department of Mathematics \& Computer Science, Mount Allison University, Sackville, NB, Canada}  \email{njohnston@mta.ca} 
\homepage{https://njohnston.ca}

\author{Vincent Russo}
\affiliation{Unitary Fund}
\email{vincent@unitary.fund} 
\homepage{https://vprusso.github.io/}

\author{Jamie Sikora}
\affiliation{Department of Computer Science, Virginia Polytechnic Institute and State University, Blacksburg, Virginia, USA}
\email{sikora@vt.edu}
\homepage{https://sites.google.com/site/jamiesikora/}

\maketitle 

\begin{abstract}
    A set of pure quantum states is said to be antidistinguishable if upon sampling one at random, there exists a measurement to perfectly determine some state that was not sampled. We show that antidistinguishability of a set of $n$ pure states is equivalent to a property of its Gram matrix called $(n-1)$-incoherence, thus establishing a connection with quantum resource theories that lets us apply a wide variety of new tools to antidistinguishability. As a particular application of our result, we present an explicit formula (not involving any semidefinite programming) that determines whether or not a set with a circulant Gram matrix is antidistinguishable. We also show that if all inner products are smaller than $\sqrt{(n-2)/(2n-2)}$ then the set must be antidistinguishable, and we show that this bound is tight when $n \leq 4$.  
    We also give a simpler proof that if all the inner products are strictly larger than $(n-2)/(n-1)$, then the set cannot be antidistinguishable, and we show that this bound is tight for all $n$. 
\end{abstract}
  

\section{Introduction}\label{sec:intro}

A collection of pure quantum states $\{ \ket{\psi_0}, \ket{\psi_1}, \ldots, \ket{\psi_{n-1}} \}$ is called \emph{antidistinguishable}~\cite{caves2002conditions, leifer2014quantum,heinosaari2018antidistinguishability} if there exists a positive operator-valued measure $\{ M_0, M_1, \ldots, M_{n-1} \}$ such that 
\begin{equation} 
    \bra{\psi_i}M_i\ket{\psi_i} = 0, \text{ for all } i \in \{ 0, 1, \ldots, n-1 \}. 
\end{equation}
The outcome of the measurement can be interpreted as ruling out one of the $\ket{\psi_i}$ states. For example, if outcome $M_i$ occurs then we know for certain that $\ket{\psi_i}$ was not measured. 
The notion of antidistinguishability was introduced in~\cite{caves2002conditions} where it was referred to as \emph{post-Peierls incompatibility}. Antidistinguishability was later used as a key part in the proof of the PBR theorem~\cite{pusey2012reality}; a result that has significance to the foundations of quantum mechanics, and more specifically, significance to how one may interpret the reality of the quantum state. 

Antidistinguishability is also referred to as \emph{unambiguous quantum state exclusion}~\cite{bandyopadhyay2014conclusive}. This setting of quantum state exclusion (sometimes referred to as error-free quantum state elimination) has also found utility in the context of quantum communication~\cite{perry2015communication, heinosaari2019communication, havlivcek2020simple} as well as quantum cryptography where it has been used to reduce the need for long-term quantum memory for digital signature schemes~\cite{dunjko2014quantum} and to develop oblivious transfer protocols~\cite{amiri2021imperfect}.

In contrast to quantum state exclusion is the more well-established field of quantum state distinguishability that enjoys a rich history of study and has served to be foundational to the field of quantum information. In the setting of quantum state distinguishability, the goal is to determine what state one is given from a collection of quantum states. Whereas the setting of quantum state exclusion has the goal of determining which state one is \emph{not} given. Quantum state exclusion, and by proxy, the notion of antidistinguishability, has not been as thoroughly explored as quantum state distinguishability~\cite{bennett1999quantum, chefles2000quantum, walgate2000local, ghosh2001distinguishability, virmani2001optimal, walgate2002nonlocality, horodecki2003local, watrous2005bipartite, barnett2009quantum, bergou2010discrimination, qiu2010minimum, bae2015quantum}. 

One way in which to further our understanding of the notion of antidistinguishability is to determine under which conditions a collection of states is antidistinguishable. In~\cite{bandyopadhyay2014conclusive}, a necessary condition for antidistinguishability was provided as a function of the fidelity of the states in the collection. Similarly, in~\cite{heinosaari2018antidistinguishability}, the authors provided a sufficient condition for antidistinguishability based on algebraic properties of the states. 
In a recent work~\cite{mishra2023optimal}, optimal error exponents for antidistinguishability are given for the classical version of the problem and they also provide bounds for the quantum case, leaving an exact expression as an open problem. 
Antidistinguishability also appears in the study of quantum contextuality~\cite{PhysRevA.101.062113,PRXQuantum.3.020307}. 

In~\cite{havlivcek2020simple}, the authors conjectured that if a collection of $d$ states each of dimension $d$ satisfied an inequality based on $d$, then the states are antidistinguishable.  
The validity of this conjecture would imply the existence of an improved separation between a classical and quantum communication task~\cite{havlivcek2020simple} as well as a strengthening of the PBR theorem~\cite{montina2014lower, montina2014necessary}. This conjecture is known to be true for $d=2$ and for $d=3$~\cite{caves2002conditions} and also had some amount of numerical evidence to suggest that it might be true for higher dimensions as well~\cite{havlivcek2020simple}. However, a counterexample to the conjecture for $d=4$ was presented in~\cite{russo2023inner}. While this disproved the conjecture, the counterexample was not optimal and it was not clear whether the conjecture could be reframed or salvaged. We provide an optimal disproof of the conjecture for $d=4$ in Example~\ref{ex:d4_lb_tight} as well as a correction to the conjecture in Corollary~\ref{cor:antidist_by_IP}. In particular, our correction is a trivial-to-compute sufficient condition for antidistinguishability of a family of states based on their inner products.

In order to establish our results, we explore how antidistinguishability of a collection of pure quantum states can be determined by their Gram matrix. 
In some sense, considering the Gram matrix in this context is a natural thing to do and is inspired by the following references on the quantum change point problem~\cite{sentis2016quantum, sentis2017exact}. In particular, we establish a novel connection between antidistinguishability and quantum resource theories: we show in Theorem~\ref{thm:n_1_incoh} that a collection of pure states is antidistinguishable if and only if their Gram matrix is ``$(n-1)$-incoherent'' \cite{RBC18}. Since numerous properties of $(n-1)$-incoherent states are known \cite{LBT19,LSLL21,LM14,ZGY21}, this provides a wide array of new tools that can be used to investigate antidistinguishability, and we use a result from \cite{JMPP22} to establish our correction to the conjecture. We also establish numerous other necessary and sufficient conditions for antidistinguishability along the way that are of independent interest. 
Finally, we note that if the Gram matrix is circulant, then we derive an exact characterization of its antidistinguishability.

\subsection{Structure of the paper}\label{sec:structure}

We start in Section~\ref{sec:preliminaries} by presenting some mathematical background material that is required to present our results. 
In particular, we introduce the mathematical basics of antidistinguishability in Section~\ref{sec:antidist_basics}, Gram matrices in Section~\ref{sec:gram}, circulant matrices in Section~\ref{sec:circulants}, and the concept of $(n-1)$-incoherence in Section~\ref{sec:n1incoh}.

We then proceed in Section~\ref{sec:technical_stuff} to establish some of our more technical results.
In Section~\ref{sec:gram_sdp}, we develop a new (somewhat simpler than previously known) semidefinite program for checking antidistinguishability of a set of quantum states that uses the set's Gram matrix as input. We then proceed in Section~\ref{sec:antidist_from_coherence} to show that antidistinguishability of a set is equivalent to $(n-1)$-incoherence of its Gram matrix.

The remaining sections of the paper are devoted to establishing bounds that can be used to determine (non-)antidistinguishability of a set in ways that are simpler to evaluate than semidefinite programs. In Section~\ref{sec:upper_bounds} we re-derive a trivial-to-compute necessary condition for antidistinguishability via our framework. In Section~\ref{sec:lower_bounds} we develop several new trivial-to-compute sufficient conditions for antidistinguishability, including a condition that is both necessary and sufficient for sets of pure states that have a circulant Gram matrix. 
Finally, we explore the question of how tight the conditions from Sections~\ref{sec:upper_bounds} and~\ref{sec:lower_bounds} are in Section~\ref{sec:tightness}. 

\section{Mathematical preliminaries}\label{sec:preliminaries}

Throughout this paper, $n$ and $d$ are positive integers, and $\C^d$ is a finite-dimensional complex Euclidean space with standard basis $\{\ket{0},\ket{1},\ldots,\ket{d-1}\}$. We use the notation $\Pos(\C^d)$, $\Herm(\C^d)$, and $\Unitary(\C^d)$ to represent the sets of positive semidefinite (PSD) operators, Hermitian operators, and unitary operators acting on $\C^d$, respectively. If $A,B \in \Herm(\C^d)$ then the notation $A \preceq B$ means that $B-A \in \Pos(\C^d)$. We use $I \in \Pos(\C^d)$ and $O \in \Pos(\C^d)$ for the identity and zero operators acting on $\C^d$ (or $I_n$ and $O_n$ if we want to emphasize their size), respectively. We often represent linear operators as matrices in the usual way via the standard basis but we index their entries starting at $0$ (so, for example, we use $A_{0,0} = \bra{0}A\ket{0}$ to denote the $(0,0)$-entry of a matrix $A$, which is the entry at $A$'s top-left corner).

We provide a brief introduction to the mathematics of quantum information theory; the interested reader should pursue any of a number of standard books \cite{NC00,watrous2018theory} for a more thorough treatment of the subject. A \emph{pure quantum state} is a column vector $\ket{\psi} \in \C^d$ with Euclidean norm equal to $1$. A \emph{positive operator-valued measure (POVM)} is a set $\left\{ M_i : 0 \leq i \leq n-1 \right\} \subset \Pos(\C^d)$ satisfying 
\begin{equation*}
    \sum_{i=0}^{n-1} M_i = I,
\end{equation*}
and we refer to an individual $M_i$ as a \emph{measurement}.

\subsection{Antidistinguishability}\label{sec:antidist_basics}

For a POVM $\{M_0, \ldots, M_{n-1}\} \subset \Pos(\C^d)$ and set of pure states $\{\ket{\psi_0}, \ldots, \ket{\psi_{n-1}}\} \subset \C^d$, the probability of obtaining outcome $0 \leq i \leq n-1$, given the state $\ket{\psi_i}$, can be calculated by
\begin{equation*}
    p(i) = \bra{\psi_i} M_i \ket{\psi_i},
\end{equation*}
where $\bra{\psi_i}$ is the conjugate transpose of $\ket{\psi_i}$. The set of states is \emph{antidistinguishable} if there exists a POVM such that $\bra{\psi_i} M_i \ket{\psi_i} = 0$ for all $0 \leq i \leq n-1$.

Whether a set is antidistinguishable or not can be determined by a semidefinite program (SDP)~\cite{bandyopadhyay2014conclusive,russo2023inner}; for a general introduction to semidefinite programming in the context of quantum information theory, see \cite{watrous2018theory}, for example. 
We note here that both the primal and dual problems below share the same optimal objective function values thanks to strong duality and, moreover, both problems attain an optimal solution. 
In particular, a set is antidistinguishable if and only if the optimal value of the following primal-dual pair of SDPs is equal to $0$:\vspace*{-0.6cm}
\begin{center}
\begin{equation}\label{eq:antidist-sdp}
  \begin{minipage}{2.4in}
    \centerline{\underline{Primal problem}}\vspace{-7mm}
    \begin{align*}
      \text{minimize:}\quad & \sum_{i = 0}^{n-1} \bra{\psi_i}M_i\ket{\psi_i} \\
      \text{subject to:}\quad & \sum_{i = 0}^{n-1} M_i = I, \\
      & M_0,\ldots,M_{n-1} \in \Pos(\C^d)
    \end{align*}
  \end{minipage}
  \hspace*{10mm}
  \begin{minipage}{2.2in}
    \centerline{\underline{Dual problem}}\vspace{-7mm}
    \begin{align}
      \text{maximize:}\quad & \tr(Y) \nonumber \\
      \text{subject to:}\quad & Y \preceq \kb{\psi_0}, \nonumber \\
       & \quad \ \vdots \nonumber \\
       & Y \preceq \kb{\psi_{n-1}}, \nonumber \\
      & Y \in \Herm(\C^d). \nonumber
    \end{align}
  \end{minipage}
  \end{equation}
\end{center}
Slightly more generally, if we divide the optimal value of this SDP by $n$ then we get exactly the minimum probability of incorrectly performing state exclusion on the set (i.e., determining a state from the set that we were \emph{not} given), when the states from the set are provided as input with uniform probability. The set is antidistinguishable if and only if this optimal probability of being incorrect is $0$.

\begin{example}\label{exam:trine}
    Consider the collection $\{\ket{\psi_0},\ket{\psi_1},\ket{\psi_2}\} \subset \C^2$ of the ``trine'' states:
    \[
        \ket{\psi_0} = \ket{0}, \quad \ket{\psi_1} = -\frac{1}{2}\big(\ket{0} + \sqrt{3}\ket{1}\big), \quad \ket{\psi_2} = -\frac{1}{2}\big(\ket{0} - \sqrt{3}\ket{1}\big).
    \]
    This set is well-known to be antidistinguishable \cite{graeme2018thesis} but not distinguishable (since the states are not orthogonal). Indeed, a measurement $\{M_0,M_1,M_2\}$ that antidistinguishes this set comes from simply choosing each $M_i$ to be (up to scaling) a rank-$1$ projection onto the orthogonal complement $\ket{\psi_i^\perp}$ of $\ket{\psi_i}$. In particular,
    \begin{align*}
        M_0 & = \frac{2}{3}\ketbra{\psi_0^\perp}{\psi_0^\perp} = \frac{2}{3}(I - \kb{\psi_0}),\\ 
        M_1 & = \frac{2}{3}\ketbra{\psi_1^\perp}{\psi_1^\perp} = \frac{2}{3}(I - \kb{\psi_1}),\\ 
        M_2 & = \frac{2}{3}\ketbra{\psi_2^\perp}{\psi_2^\perp} = \frac{2}{3}(I - \kb{\psi_2}),\\ 
    \end{align*}
    as illustrated in Figure~\ref{fig:trine}.
    
    \begin{figure}[ht]
        \centering
        \begin{tikzpicture}[scale=2.5]%
    		\coordinate (O) at (0,0);
    		\coordinate (P) at (1,0);
    		\coordinate (Q) at (-0.5,-0.86602540378);
    		\coordinate (R) at (-0.5,0.86602540378);
    		\coordinate (S) at (0,1);
    		\coordinate (T) at (0.86602540378,-0.5);
    		\coordinate (U) at (-0.86602540378,-0.5);
      

    		\draw[color=black] (P) arc (0:360:1);
    		
      
    		\draw[thick,dashed,color=red,-stealth] (O) -- (S) node[anchor=south]{$\ket{\psi_0^\perp}$};
    		\draw[thick,dashed,color=red,-stealth] (O) -- (T) node[anchor=north west]{$\ket{\psi_1^\perp}$};
    		\draw[thick,dashed,color=red,-stealth] (O) -- (U) node[anchor=north east]{$\ket{\psi_2^\perp}$};
      
    		\draw[thick,color=blue,-stealth] (O) -- (P) node[anchor=west]{$\ket{\psi_0}$};
    		\draw[thick,color=blue,-stealth] (O) -- (Q) node[anchor=north east]{$\ket{\psi_1}$};
    		\draw[thick,color=blue,-stealth] (O) -- (R) node[anchor=south east]{$\ket{\psi_2}$};
    		
    	\end{tikzpicture}
        \caption{The trine states $\{ \ket{\psi_0}, \ket{\psi_1}, \ket{\psi_2} \}$ on the unit circle in $\mathbb{R}^2$, indicated in solid blue above, are antidistinguishable as witnessed by the POVM $M_0 = \tfrac{2}{3}\kb{\psi_0^\perp}$, $M_1 = \tfrac{2}{3}\kb{\psi_1^\perp}$, $M_2 = \tfrac{2}{3}\kb{\psi_2^\perp}$, where $\{ \ket{\psi_0^\perp}, \ket{\psi_1^\perp}, \ket{\psi_2^\perp} \}$ are indicated in dashed red.}\label{fig:trine}
    \end{figure}
    
    Indeed, it is straightforward to check that $M_0 + M_1 + M_2 = I$, so this measurement is feasible in the primal SDP~\eqref{eq:antidist-sdp}, with orthogonality resulting in an objective value of $0$. More generally, any collection of $n$ pure states in $\C^2$ that have $\frac{1}{n}\sum_{i=0}^{n-1} \kb{\psi_i} = \frac{1}{2}I$ (i.e., pure states that are ``evenly distributed'' on the surface of the Bloch sphere) is antidistinguishable, since we can choose measurement operators that are orthogonal to each of them.
\end{example}

\subsection{Gram matrices}\label{sec:gram}

The \emph{Gram matrix} of a set $S = \{ \ket{\psi_0}, \ket{\psi_1}, \ldots, \ket{\psi_{n-1}} \} \subset \C^d$ is the matrix $G \in \Pos(\C^n)$ whose $(i,j)$-entry is $G_{i,j} = \ip{\psi_i}{\psi_j}$. It is straightforward to see that if $U \in \Unitary(\C^d)$ then $US := \{ U\ket{\psi_0}, U\ket{\psi_1}, \ldots, U\ket{\psi_{n-1}} \}$ has the same Gram matrix as $S$ (the converse of this statement is also true, but somewhat less obvious: if two sets of pure states $S, S^\prime \subset \C^d$ have the same Gram matrix then there exists $U \in \Unitary(\C^d)$ such that $S^\prime = US$ \cite[Section~2.2.3]{JohALA}).

We can write the Gram matrix succinctly as $G = \sum_{i,j=0}^{n-1} \braket{\psi_i}{\psi_j} \ket{i}\bra{j} = W^*W$, where 
\begin{equation}\label{eq:W_gram}
    W := \sum_{k=0}^{n-1} \ketbra{\psi_k}{k} 
\end{equation}
is the $d \times n$ matrix with $\ket{\psi_k}$ as its $k$-th column. A few properties of this $W$ matrix are convenient for our analysis. Firstly, we have $W\ket{k} = \ket{\psi_k}$ for all $0 \leq k \leq n-1$. Secondly, if the set $S$ is linearly independent, then $W$ has full column rank, in which case there exists an $n \times d$ matrix $V$ such that $VW = I_n$. In particular, this implies $V\ket{\psi_k} = \ket{k}$.

\subsection{Circulant matrices}\label{sec:circulants}

An $n \times n$ matrix $G$ is called \emph{circulant} if there exist scalars $g_0$, $g_1$, $\ldots$, $g_{n-1} \in \C$ so that
\[
    G = \begin{bmatrix}
        g_0 & g_1 & g_2 & \cdots & g_{n-2} & g_{n-1} \\
        g_{n-1} & g_0 & g_1 & \cdots & g_{n-3} & g_{n-2} \\
        g_{n-2} & g_{n-1} & g_0 & \cdots & g_{n-4} & g_{n-3} \\
        \vdots & \vdots & \vdots & \ddots & \vdots & \vdots \\
        g_2 & g_3 & g_4 & \cdots & g_0 & g_1 \\
        g_1 & g_2 & g_3 & \cdots & g_{n-1} & g_0
    \end{bmatrix}.
\]
There are two special matrices that are of particular importance when working with circulant matrices. In particular, we define
\begin{align}\label{eq:fourier}
    P := \begin{bmatrix}
        0 & 1 & 0 & 0 &\cdots & 0 \\
        0 & 0 & 1 & 0 &\cdots & 0 \\
        0 & 0 & 0 & 1 &\cdots & 0 \\
        \vdots & \vdots & \vdots & \vdots & \ddots & \vdots \\
        0 & 0 & 0 & 0 &\cdots & 1 \\
        1 & 0 & 0 & 0 &\cdots & 0
    \end{bmatrix} \quad \text{and} \quad F := \frac{1}{\sqrt{n}}\begin{bmatrix}
        1 & 1 & 1 & \cdots & 1 \\
        1 & \omega & \omega^2 & \cdots & \omega^{n-1} \\
        1 & \omega^2 & \omega^4 & \cdots & \omega^{2(n-1)}\\
        \vdots & \vdots & \vdots & \ddots & \vdots\\
        1 & \omega^{n-1} & \omega^{2(n-1)} & \cdots & \omega^{(n-1)^2}
    \end{bmatrix}
\end{align}
($P$ is a cyclic permutation matrix and $F$ is the Fourier matrix). The following characterization of circulant matrices is well-known (see \cite{CircBook}, for example):

\begin{proposition}\label{prop:circulant}
    Let $G$ be an $n \times n$ matrix, and let $P$ and $F$ be as in Equation~\eqref{eq:fourier}. The following are equivalent:
    \begin{enumerate}
        \item[a)] $G$ is circulant.

        \item[b)] $G = PGP^*$.

        \item[c)] $G$ is diagonalized by the Fourier matrix: $G = FDF^*$ for some diagonal matrix $D$.
    \end{enumerate}
\end{proposition}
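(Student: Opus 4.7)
The plan is to prove the proposition as two separate two-way equivalences: (a) $\Leftrightarrow$ (b) by an entrywise matrix computation, and (b) $\Leftrightarrow$ (c) by exploiting the fact that the Fourier matrix $F$ diagonalizes $P$.

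For (a) $\Leftrightarrow$ (b), I would first unpack the permutation matrix from~\eqref{eq:fourier}: its only nonzero entries are $P_{i,k} = 1$ when $k \equiv i+1 \pmod{n}$. A short calculation of the matrix product then yields $(PGP^*)_{i,j} = G_{i+1,\,j+1}$, with all indices taken mod $n$. Therefore $G = PGP^*$ if and only if $G_{i,j} = G_{i+1,j+1}$ for all $i,j$, and this in turn is equivalent to the statement that $G_{i,j}$ depends only on $j-i \pmod{n}$—which is exactly the definition of $G$ being circulant.

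For (b) $\Leftrightarrow$ (c), the key auxiliary step is to check that $P$ itself has the form in~(c): writing $\omega = e^{2\pi i/n}$ and $D_\omega := \operatorname{diag}(1,\omega,\omega^2,\ldots,\omega^{n-1})$, one computes $(PF)_{i,k} = F_{i+1,k} = \omega^k F_{i,k}$ directly from the definition of $F$, so $PF = F D_\omega$, and since $F$ is unitary this gives $P = F D_\omega F^*$. For (c) $\Rightarrow$ (b), I would substitute: if $G = FDF^*$ with $D$ diagonal, then $PGP^* = F D_\omega D D_\omega^* F^*$, and since diagonal matrices commute and $D_\omega D_\omega^* = I$, this equals $FDF^* = G$. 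For the converse (b) $\Rightarrow$ (c), I would set $\tilde{G} := F^* G F$ and conjugate $G = PGP^*$ by $F^*$ on the left and $F$ on the right to obtain $\tilde{G} = D_\omega \tilde{G} D_\omega^*$. Reading off entry $(j,k)$ then yields $\tilde{G}_{j,k} = \omega^{j-k}\tilde{G}_{j,k}$, and since $\omega^{j-k} \neq 1$ whenever $j \not\equiv k \pmod{n}$, every off-diagonal entry of $\tilde{G}$ must vanish. Hence $\tilde{G}$ is diagonal and $G = F \tilde{G} F^*$ has the required form.

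The only real obstacle is bookkeeping: fixing the orientation of the cyclic shift implemented by $P$ (which dictates whether one ends up with $D_\omega$ or $D_\omega^*$), and keeping track of indices modulo $n$. Once those conventions are pinned down, each of the four implications reduces to a short matrix computation, so no deeper ideas are required.
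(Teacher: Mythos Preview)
Your argument is correct and is the standard textbook proof of this well-known characterization. Note, however, that the paper does not actually supply a proof of this proposition: it is stated as a known fact with a reference to \cite{CircBook}, so there is no ``paper's own proof'' to compare against. Your write-up would serve perfectly well as a self-contained justification, and the only thing to watch is exactly the bookkeeping you flag---with the paper's convention for $P$ one indeed gets $(PGP^*)_{i,j}=G_{i+1,j+1}$ and $PF=FD_\omega$, so your signs and shifts are consistent.
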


Condition~(c) of the above proposition is particularly useful for us, as it tells us that we can construct a circulant Gram matrix with any (necessarily non-negative, adding up to $n$) eigenvalues that we like: just place those eigenvalues along the diagonal of a diagonal matrix $D$ and then $G = FDF^*$ will be a circulant Gram matrix with those eigenvalues.

If such a $G$ is positive semidefinite (and thus Hermitian, so $g_j = \overline{g_{n-j}}$ for all $1 \leq j \leq n-1$) with $g_0 = 1$ then it is the Gram matrix of some set of pure states $S = \{\ket{\psi_0},\ket{\psi_1},\ldots,\ket{\psi_{n-1}}\}$. In this case, $G$ being circulant corresponds to the inner products of the members of $S$ being invariant under cyclic permutations of the indices: $\braket{\psi_i}{\psi_j} = \braket{\psi_{i+1\pmod{n}}}{\psi_{j+1\pmod{n}}}$ for all $i,j$. This motivates the following definition:

\begin{definition}\label{defn:circulant_states}
    We say that a set of pure quantum states $S = \{\ket{\psi_0},\ket{\psi_1},\ldots,\ket{\psi_{n-1}}\}$ is \emph{circulant} if it has either of the following equivalent properties:
    \begin{enumerate}
        \item[a)] The Gram matrix of $S$ is circulant.

        \item[b)] There exists a pure state $\ket{\psi}$ and a unitary matrix $U$ with the property that $U^n = I$ such that $S = \{\ket{\psi}, U\ket{\psi}, U^2\ket{\psi}, \ldots, U^{n-1}\ket{\psi}\}$.

        \item[c)] $\braket{\psi_i}{\psi_j} = \braket{\psi_{i+1\pmod{n}}}{\psi_{j+1\pmod{n}}}$ for all $0 \leq i,j \leq n-1$.
    \end{enumerate}
\end{definition}

We note that sets of quantum states with property~(b) above are sometimes called \emph{symmetric}~\cite{DA12} or geometrically uniform~\cite{howard2012qudit, dalla2015optimality}. The fact that that property is equivalent to property~(a) is proved in \cite[Proposition~3.12]{DM16}, where it was furthermore shown that $\ket{\psi}$ can be chosen to belong to $\R^n$ and have non-negative entries, and $U$ can be chosen to be $U = \mathrm{diag}(1,\omega,\omega^2,\ldots,\omega^{n-1})$, where $\omega = \exp(2\pi i/n)$ is a primitive $n$-th root of unity.

\subsection{\texorpdfstring{$\mathbf{(n-1)}$}{(n-1)}-incoherence}\label{sec:n1incoh}

One of our main results is the fact that antidistinguishability of a set of pure states is equivalent to a certain notion from the theory of quantum resources:

\begin{definition}[\cite{LM14,Spe15}]\label{defn:kincoh}
    Let $k$ be a positive integer. Then $X \in \Pos(\C^n)$ is called \emph{$k$-incoherent} if there exists a positive integer $m$, a set $S = \{\ket{\psi_0}, \ket{\psi_1}, \ldots, \ket{\psi_{m-1}}\} \subset \C^n$ with the property that each $\ket{\psi_i}$ has at most $k$ non-zero entries, and real scalars $c_0$, $c_1$, $\ldots$, $c_{m-1} \geq 0$ for which
    \begin{align}\label{eq:kincoh_decomp}
        X = \sum_{j=0}^{m-1} c_j\ketbra{\psi_j}{\psi_j}.
    \end{align}
\end{definition}

Strictly speaking, the term ``$k$-incoherent'' is typically only applied to positive semidefinite operators with trace $1$. However, the trace does not substantially affect any properties of $k$-incoherence, and it is more convenient for us to omit the trace restriction. In pure mathematics, a $k$-incoherent operator is sometimes said to have \emph{factor width at most $k$} \cite{boman2005factor}. Informally, $X$ is $k$-incoherent exactly when it can be written as a convex combination of positive semidefinite matrices, each of which is identically zero outside of a single $k \times k$ principal submatrix. For example, a positive semidefinite matrix is $1$-incoherent if and only if it is diagonal, and every $n \times n$ PSD matrix is $n$-incoherent.

We are particularly interested in the case when $k = n-1$, so we restrict our attention to $(n-1)$-incoherence for the rest of the paper. When $n = 2$ the $(n-1)$-incoherent operators are (as mentioned earlier) exactly those that are PSD and diagonal. When $n \geq 3$ this set of matrices is somewhat more complicated, but membership in it can be determined efficiently by semidefinite programming \cite{RBC18}. In particular, $X$ is $(n-1)$-incoherent if and only if there exist matrices $F_0,F_1,\ldots, F_{n-1} \in \Pos(\C^n)$ for which $G = \sum_{j=0}^{n-1} F_j$ and $\bra{i}F_i\ket{i} = 0$ for all $i$: each $F_i$ is equal to $\sum_j c_j \ketbra{\psi_j}{\psi_j}$ where the sum ranges over the pure states $\ket{\psi_j}$ that equal $0$ in their $i$-th entry, and conversely each $F_i$ with $\bra{i}F_i\ket{i} = 0$ has a spectral decomposition made up of pure states whose $i$-th entry equals $0$. 

For example, decompositions like the one below can be found quickly by computer software, thus certifying $(n-1)$-incoherence:
\begin{equation}
    \begin{bmatrix}
        2 & 1 & 2 \\
        1 & 2 & -1 \\
        2 & -1 & 5
    \end{bmatrix} = \begin{bmatrix}
        0 & 0 & 0 \\
        0 & 1 & -1 \\
        0 & -1 & 1
    \end{bmatrix} + \begin{bmatrix}
        1 & 0 & 2 \\
        0 & 0 & 0 \\
        2 & 0 & 4
    \end{bmatrix} + \begin{bmatrix}
        1 & 1 & 0 \\
        1 & 1 & 0 \\
        0 & 0 & 0
    \end{bmatrix},
\end{equation} 
where the $3$ matrices on the right are what we called $F_0$, $F_1$, and $F_2$ above. Computing a spectral decomposition of these $3$ matrices would then give a pure state $(n-1)$-incoherence decomposition of the form in equation~\eqref{eq:kincoh_decomp}.

The set of all $(n-1)$-incoherent $X \in \Pos(\C^n)$ is a closed convex cone inside the real vector space $\Herm(\C^n)$, so it admits separating hyperplanes. That is, for every $\widetilde{X} \in \Pos(\C^n)$ which is \emph{not} $(n-1)$-incoherent, there exists $Y \in \Herm(\C^n)$ (a separating hyperplane) with the property that $\tr(XY) \geq 0$ for all $(n-1)$-incoherent $X \in \Pos(\C^n)$ and $\tr(\widetilde{X}Y) < 0$. The following definition describes these separating hyperplanes more explicitly:

\begin{definition}[\cite{blekherman2022hyperbolic,JMPP22}]\label{defn:k_fac_pos}
    We say that $Y \in \Herm(\C^n)$ is \emph{$(n-1)$-locally PSD} if it has any of the following equivalent properties:
    
    \begin{enumerate}
        \item[a)] $\tr(XY) \geq 0$ for all $(n-1)$-incoherent $X \in \Pos(\C^n)$.
        
        \item[b)] $\bra{\psi}Y\ket{\psi} \geq 0$ for all pure states $\ket{\psi} \in \C^n$ with at most $n-1$ non-zero entries.
        
        \item[c)] Every $(n-1) \times (n-1)$ principal submatrix of $Y$ is positive semidefinite.
    \end{enumerate}
\end{definition}

In other words, the sets of $(n-1)$-incoherent operators and $(n-1)$-locally PSD operators are dual cones of each other (see \cite{BV04} for an introduction to dual cones). Given an operator $X \in \Pos(\C^n)$ that is not $(n-1)$-incoherent, it is straightforward to use semidefinite programming to find an $(n-1)$-locally PSD operator $Y$ for which $\tr(XY) < 0$, thus certifying non-$(n-1)$-incoherence of $X$. For example, if
\[
    X = \begin{bmatrix}
        1 & 1 & 1 \\
        1 & 1 & 1 \\
        1 & 1 & 1
    \end{bmatrix} \quad \text{and} \quad Y = \begin{bmatrix}
        1 & -1 & -1 \\
        -1 & 1 & -1 \\
        -1 & -1 & 1
    \end{bmatrix}
\]
then it is straightforward to show that every $(n-1) \times (n-1) = 2 \times 2$ principal submatrix of $Y$ is PSD, so $Y$ is $(n-1)$-locally PSD, but $\tr(XY) = -3 < 0$, so $X$ is not $(n-1)$-incoherent (despite being PSD).

We close this section by showing that circulant matrices play particularly well with $(n-1)$-incoherence and $(n-1)$-locally positive semidefiniteness. The following result shows that when investigating $(n-1)$-incoherence of circulant matrices, it suffices to consider $(n-1)$-locally PSD matrices that are also circulant:

\begin{lemma}\label{lem:circulant_n1_inc}
    Suppose $X \in \Herm(\C^n)$ is circulant. Then we have that $X$ is $(n-1)$-incoherent if and only if $\tr(XY) \geq 0$ for all $n \times n$ circulant $(n-1)$-locally PSD matrices $Y$.
\end{lemma}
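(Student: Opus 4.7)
The plan is to prove the forward direction essentially for free and the reverse direction by a symmetrization argument over the cyclic group generated by the permutation matrix $P$.

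For the forward direction, if $X$ is $(n-1)$-incoherent then by Definition~\ref{defn:k_fac_pos}(a) we have $\tr(XY) \geq 0$ for \emph{every} $(n-1)$-locally PSD $Y$, and in particular for every circulant one, so nothing needs doing.

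For the reverse direction, I would argue the contrapositive: assume $X$ is circulant but \emph{not} $(n-1)$-incoherent, and construct a circulant $(n-1)$-locally PSD $Y$ witnessing $\tr(XY) < 0$. Since the set of $(n-1)$-incoherent matrices is a closed convex cone whose dual cone is the set of $(n-1)$-locally PSD matrices, there exists some $(n-1)$-locally PSD $Y_0 \in \Herm(\C^n)$ with $\tr(XY_0) < 0$. The plan is then to symmetrize $Y_0$ by defining
\begin{equation*}
    Y \;\defeq\; \frac{1}{n}\sum_{k=0}^{n-1} P^k Y_0 (P^*)^k,
\end{equation*}
where $P$ is the cyclic permutation matrix from Equation~\eqref{eq:fourier}, and verify that $Y$ has all three required properties.

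Circulance of $Y$ follows immediately from Proposition~\ref{prop:circulant}(b), since $P Y P^* = Y$ by a re-indexing of the sum (using $P^n = I$). To see that $Y$ is still $(n-1)$-locally PSD, I would use characterization~(b) of Definition~\ref{defn:k_fac_pos}: for any pure state $\ket{\psi} \in \C^n$ with at most $n-1$ non-zero entries, the shifted vector $(P^*)^k\ket{\psi}$ also has at most $n-1$ non-zero entries, so $\bra{\psi} P^k Y_0 (P^*)^k \ket{\psi} = \bra{(P^*)^k\psi}Y_0\ket{(P^*)^k\psi} \geq 0$. Averaging preserves non-negativity, so $\bra{\psi}Y\ket{\psi} \geq 0$. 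Finally, to compute $\tr(XY)$, I would use cyclicity of the trace together with the fact that $X$ is circulant: Proposition~\ref{prop:circulant}(b) gives $PXP^* = X$, which iterates to $(P^*)^k X P^k = X$ for every $k$. Therefore
\begin{equation*}
    \tr(XY) \;=\; \frac{1}{n}\sum_{k=0}^{n-1}\tr\bigl((P^*)^k X P^k\, Y_0\bigr) \;=\; \frac{1}{n}\sum_{k=0}^{n-1}\tr(X Y_0) \;=\; \tr(X Y_0) \;<\; 0,
\end{equation*}
which completes the argument.

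There is no genuine obstacle here; the only conceptual step is realizing that the cyclic group $\langle P\rangle$ acts naturally on $\Herm(\C^n)$, preserves the cone of $(n-1)$-locally PSD matrices (because it permutes the coordinates and thus permutes the $(n-1)\times(n-1)$ principal submatrices), and fixes $X$ pointwise, so that Reynolds-style averaging lets us replace any separating hyperplane by a circulant one without changing the relevant inner product.
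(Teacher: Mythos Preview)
Your proof is correct and uses essentially the same approach as the paper: both rely on the averaging map $Y \mapsto \frac{1}{n}\sum_{k} P^{k} Y (P^{*})^{k}$, which projects onto circulant matrices and preserves $(n-1)$-local PSD-ness, together with the fact that a circulant $X$ is fixed by this averaging. The only cosmetic difference is that the paper argues the direct implication (for arbitrary $(n-1)$-locally PSD $Z$, $\tr(XZ) = \tr(X\,P_{\textup{C}}(Z)) \geq 0$), whereas you argue the contrapositive by symmetrizing a separating witness $Y_0$.
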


\begin{proof}
    The ``only if'' direction follows immediately from Definition~\ref{defn:k_fac_pos}: if $X$ is $(n-1)$-incoherent then $\tr(XY) \geq 0$ for \emph{all} (not necessarily circulant) $(n-1)$-locally PSD matrices $Y$. We thus just need to prove the ``if'' direction.

    To this end, consider the linear map $P_{\textup{C}} : \Herm(\C^n) \rightarrow \Herm(\C^n)$ defined by
    \[
        P_{\textup{C}}(X) \defeq \frac{1}{n}\sum_{j=0}^{n-1} P^jX(P^j)^*,
    \]
    where $P$ is the permutation matrix from Equation~\eqref{eq:fourier}. It is straightforward to show that $P_{\textup{C}}(X)$ is circulant for all (not necessarily circulant) $X \in \Herm(\C^n)$. In fact, $P_{\textup{C}}$ is the orthogonal projection onto the $n$-dimensional subspace of $\Herm(\C^n)$ consisting of circulant matrices. Furthermore, if $X$ is $(n-1)$-locally PSD then so is each $P^jX(P^j)^*$, so $P_{\textup{C}}(X)$ is $(n-1)$-locally PSD too.

    Now suppose that $X$ is circulant (so $P_{\textup{C}}(X) = X$) and $\tr(XY) \geq 0$ for all circulant $(n-1)$-locally PSD matrices $Y$. Then for any (not necessarily circulant) $(n-1)$-locally PSD matrix $Z$ we have
    \[
        \tr(XZ) = \tr\big(P_{\textup{C}}(X)Z\big) = \tr\big(XP_{\textup{C}}(Z)\big) \geq 0,
    \]
    since $P_{\textup{C}}(Z)$ is circulant and $(n-1)$-locally PSD. It follows that $X$ is $(n-1)$-incoherent.
\end{proof}

\section{A reduced semidefinite programming formulation and technical results} 
\label{sec:technical_stuff}

We now present our technical results and mathematical framework for exploring antidistinguishability.

\subsection{An SDP formulation in terms of the Gram matrix}\label{sec:gram_sdp}

Our first result in this section is an alternate version of the semidefinite program~\eqref{eq:antidist-sdp} that is typically easier to work with (e.g., for finding explicit optimal solutions). This SDP uses the Gram matrix $G$ of the set of pure states $S$, rather than the states themselves:\vspace*{-0.6cm}
\begin{center} 
	\begin{equation} \label{eq:antidist-sdp2}
  \begin{minipage}{2.4in}
    \centerline{\underline{Primal problem}}\vspace{-7mm}
    \begin{align*}
      \text{minimize:}\quad & \sum_{i = 0}^{n-1} \bra{i} F_i \ket{i} \\
      \text{subject to:}\quad & \sum_{i = 0}^{n-1} F_i = G, \\
      & F_0,\ldots,F_{n-1} \in \Pos(\C^n) \nonumber
    \end{align*}
  \end{minipage}
  \hspace*{10mm}
  \begin{minipage}{2.2in} 
    \centerline{\underline{Dual problem}}\vspace{-7mm}
    \begin{align}
      \text{maximize:}\quad & \tr(XG) \nonumber \\
      \text{subject to:}\quad & X \preceq \kb{0}, \nonumber \\ 
       & \quad \ \vdots \nonumber \\ 
       & X \preceq \kb{n-1}, \nonumber \\ 
      & X \in \Herm(\C^n). \nonumber
    \end{align}
  \end{minipage}
  \end{equation}
\end{center}

Before proving that this semidefinite program has the same optimal value as the SDP~\eqref{eq:antidist-sdp}, we note that the primal and dual problems have a zero duality gap. This can be seen by the feasible primal solution $(F_0, F_1, \ldots, F_{n-1}) = (\frac{1}{n} G, \ldots, \frac{1}{n} G)$ and the strictly feasible dual solution $X = -I_n$. This also implies that the optimal value of this SDP is attained in the primal problem (so we do not need to consider sequences of primal feasible solutions converging onto our notion of antidistinguishability).

\begin{theorem}\label{thm:gram_sdp}
    The semidefinite programs~\eqref{eq:antidist-sdp} and~\eqref{eq:antidist-sdp2} have the same optimal value.
\end{theorem}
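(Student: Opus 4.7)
The plan is to establish both inequalities between the optimal values of~\eqref{eq:antidist-sdp} and~\eqref{eq:antidist-sdp2} by constructing, for each primal SDP, an explicit map from a feasible solution of the other SDP that preserves the objective. Both maps are mediated by the $d \times n$ matrix $W = \sum_{k=0}^{n-1} \ketbra{\psi_k}{k}$ from Equation~\eqref{eq:W_gram}, with the reverse direction additionally using its Moore--Penrose pseudoinverse $W^+$.

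In one direction, given a feasible tuple $(M_0,\ldots,M_{n-1})$ of~\eqref{eq:antidist-sdp}, I would set $F_i \defeq W^* M_i W$. Positivity is immediate, $\sum_i F_i = W^*\bigl(\sum_i M_i\bigr)W = W^*W = G$, and $\bra{i} F_i \ket{i} = \bra{\psi_i} M_i \ket{\psi_i}$ because $W\ket{i} = \ket{\psi_i}$. Thus the objective matches term by term, and the optimal value of~\eqref{eq:antidist-sdp2} is at most that of~\eqref{eq:antidist-sdp}.

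For the reverse direction, given a feasible tuple $(F_0,\ldots,F_{n-1})$ of~\eqref{eq:antidist-sdp2}, I would set $M_i \defeq (W^+)^* F_i W^+$. Without loss of generality the states span $\C^d$, since any extra dimension outside of $\mathrm{span}\{\ket{\psi_k}\}$ can be absorbed by adding $I_d - \Pi$ to any single $M_i$ (with $\Pi$ the projection onto that span), and this does not affect $\bra{\psi_i}M_i\ket{\psi_i}$ because $\ket{\psi_i} \in \range(\Pi)$. Under this assumption $WW^+ = I_d$, so each $M_i \succeq 0$ and $\sum_i M_i = (W^+)^* G W^+ = (WW^+)^*(WW^+) = I_d$, which gives feasibility in~\eqref{eq:antidist-sdp}.

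The hard part is verifying that this reverse map also preserves objective values, i.e.\ that $\bra{\psi_i} M_i \ket{\psi_i} = \bra{i} F_i \ket{i}$. Writing $Q \defeq W^+W$ for the orthogonal projection onto $\range(W^*) = \range(G)$, one has $W^+\ket{\psi_i} = W^+W\ket{i} = Q\ket{i}$, so $\bra{\psi_i} M_i \ket{\psi_i} = \bra{i} Q F_i Q \ket{i}$. The key step, which I expect to be the main obstacle, is a support-containment argument: since $\sum_i F_i = G$ with each $F_i \succeq 0$, any $\ket{v} \in \ker(G)$ satisfies $0 = \bra{v} G \ket{v} = \sum_i \bra{v} F_i \ket{v}$, and non-negativity of each summand forces $F_i\ket{v} = 0$ for every $i$. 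Hence $\range(F_i) \subseteq \range(G)$, so $Q F_i Q = F_i$ and therefore $\bra{i} Q F_i Q \ket{i} = \bra{i} F_i \ket{i}$, completing the argument.
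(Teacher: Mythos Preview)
Your proposal is correct and follows essentially the same approach as the paper: both directions use the map $F_i = W^* M_i W$ forward and the pseudoinverse-based map $M_i = (W^+)^* F_i W^+$ backward, with the key support-containment argument $\range(F_i) \subseteq \range(G)$ to show the objective is preserved. The only cosmetic difference is that the paper handles the orthogonal complement of $\mathrm{span}\{\ket{\psi_k}\}$ by adding $\tfrac{1}{n}(I - WW^\dagger)$ to every $M_i$ rather than invoking a WLOG reduction, but this is the same idea packaged differently.
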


\begin{proof} 
    Let $G$ be the Gram matrix of the set $S \subset \C^d$, and define $W$ as in Equation~\eqref{eq:W_gram}, so that $G = W^*W$. 
    We prove this theorem by demonstrating a method of converting a feasible point of one SDP into a feasible point of the other SDP with the same objective function value.
    
    If $(M_0, M_1, \ldots, M_{n-1})$ is a feasible point of the SDP~\eqref{eq:antidist-sdp} then define $F_i = W^*M_iW$ for all indices $0 \leq i \leq n-1$. Then $(F_0, F_1, \ldots, F_{n-1})$ is a feasible point of the SDP~\eqref{eq:antidist-sdp2} since each $F_i$ is positive semidefinite and
    \[
        \sum_{i=0}^{n-1} F_i = \sum_{i=0}^{n-1} W^*M_iW = W^* \left(\sum_{i=0}^{n-1} M_i\right) W = W^*IW = W^*W = G.
    \]
    Furthermore, these feasible points give the same objective values in their respective SDPs since we have $W\ket{i} = \ket{\psi_i}$ and so
    \[
        \sum_{i=0}^{n-1} \bra{i}F_i\ket{i} = \sum_{i=0}^{n-1} \bra{i}W^*M_iW\ket{i} = \sum_{i=0}^{n-1} \bra{\psi_i}M_i\ket{\psi_i}.
    \]

    Conversely, if $(F_0, F_1, \ldots, F_{n-1})$ is a feasible point of the SDP~\eqref{eq:antidist-sdp2} then let $W^\dagger$ be the (Moore--Penrose) pseudoinverse of $W$ and define $M_i = (W^\dagger)^* F_i W^\dagger + \tfrac{1}{n}(I - WW^\dagger)$ for all $0 \leq i \leq n-1$. Then $(M_0, M_1, \ldots, M_{n-1})$ is a feasible point of the SDP~\eqref{eq:antidist-sdp} because:

    \begin{itemize}
        \item Each $M_i$ is positive semidefinite. To see this note that $WW^\dagger$ is the orthogonal projection onto $\mathrm{range}(W) = \mathrm{span}(S)$, so $I - WW^\dagger$ is positive semidefinite and thus $M_i$ is as well.

        \item If we recall the pseudoinverse property $(W^\dagger)^*W^*W = W$ then we see that
        \begin{align*}
            \sum_{i=0}^{n-1} M_i & = \sum_{i=0}^{n-1} \left((W^\dagger)^* F_i W^\dagger + \frac{1}{n}\big(I - WW^\dagger\big)\right) \\
            & = (W^\dagger)^*\left(\sum_{i=0}^{n-1} F_i\right) W^\dagger + \big(I - WW^\dagger\big) \\
            & = (W^\dagger)^*GW^\dagger + \big(I - WW^\dagger\big) \\
            & = (W^\dagger)^*W^*WW^\dagger + \big(I - WW^\dagger\big) \\
            & = WW^\dagger + \big(I - WW^\dagger\big) = I.
        \end{align*}
    \end{itemize}
    Furthermore, these feasible points give the same objective function values in their respective SDPs since
    \begin{align*}
        \sum_{i=0}^{n-1} \bra{\psi_i}M_i\ket{\psi_i} & = \sum_{i=0}^{n-1} \bra{\psi_i}\left((W^\dagger)^* F_i W^\dagger + \frac{1}{n}(I - WW^\dagger)\right)\ket{\psi_i} \\
        & = \sum_{i=0}^{n-1} \bra{\psi_i}(W^\dagger)^* F_i W^\dagger\ket{\psi_i} \\
        & = \sum_{i=0}^{n-1} \bra{i}(W^\dagger W)^* F_i (W^\dagger W)\ket{i} \\
        & = \sum_{i=0}^{n-1} \bra{i}F_i\ket{i},
    \end{align*}
    where the final equality follows from the fact that $W^\dagger W$ is the orthogonal projection onto $\mathrm{range}(W^*) \supseteq \mathrm{range}(F_i)$, so $(W^\dagger W)^*F_i(W^\dagger W) = F_i$. 
\end{proof} 

Thanks to the above theorem, the dimension $d$ that the set is embedded in is completely irrelevant when considering the antidistinguishability of pure states; all that matters is how many states there are and their inner products. 
We thus ignore the dimension $d$ in the remaining sections. 

\subsection{Antidistinguishability in terms of \texorpdfstring{$\mathbf{(n-1)}$}{(n-1)}-incoherence}\label{sec:antidist_from_coherence} 

We now show that if we only care about antidistinguishability itself, and not the optimal error probability when performing state exclusion, the SDP~\eqref{eq:antidist-sdp2} can be simplified even further, to the point that it coincides with $(n-1)$-incoherence of the states' Gram matrix:

\begin{theorem}\label{thm:n_1_incoh}
    Let $G$ be the Gram matrix of a set of $n$ pure states. Then the set is antidistinguishable if and only if $G$ is $(n-1)$-incoherent. 
\end{theorem}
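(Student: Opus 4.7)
The plan is to observe that this is essentially a direct consequence of the Gram-matrix SDP reformulation of Theorem~\ref{thm:gram_sdp} combined with the SDP characterization of $(n-1)$-incoherence mentioned in Section~\ref{sec:n1incoh}. The heavy lifting has already been done in setting up those two pieces; what remains is just to match them up carefully.

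First I would recall that, by Theorem~\ref{thm:gram_sdp}, the set of pure states is antidistinguishable if and only if the primal SDP in~\eqref{eq:antidist-sdp2} attains optimal value~$0$, that is, if and only if there exist $F_0, F_1, \ldots, F_{n-1} \in \Pos(\C^n)$ with
\[
    \sum_{i=0}^{n-1} F_i = G \quad \text{and} \quad \sum_{i=0}^{n-1} \bra{i}F_i\ket{i} = 0.
\]
(We can talk about the optimal value being attained rather than merely approached because the zero-duality-gap discussion preceding Theorem~\ref{thm:gram_sdp} guarantees primal attainment.)

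The next step is to note that each term $\bra{i}F_i\ket{i}$ in the objective is non-negative, since $F_i \in \Pos(\C^n)$. Therefore the sum equals zero if and only if every individual term vanishes, i.e., $\bra{i}F_i\ket{i} = 0$ for all $0 \leq i \leq n-1$. Hence antidistinguishability is equivalent to the existence of $F_0, \ldots, F_{n-1} \in \Pos(\C^n)$ with $\sum_i F_i = G$ and $\bra{i}F_i\ket{i} = 0$ for each $i$.

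Finally, I would invoke the SDP characterization of $(n-1)$-incoherence stated in Section~\ref{sec:n1incoh}: a PSD matrix $G$ is $(n-1)$-incoherent precisely when such a decomposition $G = \sum_i F_i$ with $F_i \succeq 0$ and $\bra{i}F_i\ket{i} = 0$ exists (the condition $\bra{i}F_i\ket{i} = 0$ forces the range of $F_i$ to lie in the span of the $n-1$ basis vectors $\{\ket{j} : j \neq i\}$, so the spectral decomposition of $F_i$ is supported on vectors with at most $n-1$ non-zero entries, recovering Definition~\ref{defn:kincoh}). This completes the equivalence. There is no real obstacle — the only subtlety is the elementary observation that a sum of non-negative numbers is zero iff each is zero, which is what lets the SDP objective being zero transfer directly into the structural $(n-1)$-incoherence condition.
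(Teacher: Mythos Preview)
Your proposal is correct and follows essentially the same route as the paper's proof: both reduce antidistinguishability to the primal SDP~\eqref{eq:antidist-sdp2} having optimal value~$0$, convert that into the existence of a decomposition $G = \sum_i F_i$ with $\bra{i}F_i\ket{i} = 0$, and then identify this with the $(n-1)$-incoherence characterization from Section~\ref{sec:n1incoh}. You are simply more explicit than the paper about primal attainment and about the ``sum of non-negative terms equals zero'' step, which the paper leaves implicit.
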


\begin{proof}
    The primal version of the semidefinite program~\eqref{eq:antidist-sdp2} says that the set of states is antidistinguishable if and only if there exist $F_1, \ldots, F_n \in \Pos(\C^n)$ with $\sum_{i = 0}^{n-1} F_i = G$ and $\bra{i}F_i\ket{i} = 0$ for all $0 \leq i \leq n-1$. Since $\bra{i}F_i\ket{i} = 0$ is equivalent to the $i$-th row and column of $F_i$ being equal to $0$, this is equivalent to $(n-1)$-incoherence of $G$.    
\end{proof}

Thanks to Theorem~\ref{thm:n_1_incoh}, we can now show that a set of pure states is antidistinguishable by finding an $(n-1)$-incoherent decomposition of their Gram matrix $G$ (i.e., a way of writing $G = \sum_i F_i$, where each $F_i \in \Pos(\C^{n})$ has at least one row and column equal to $0$), and we can show that it is not antidistinguishable by finding an $(n-1)$-locally PSD matrix $Y$ for which $\tr(YG) < 0$. Both of these tasks can be carried out straightforwardly by semidefinite programming. While we could already determine antidistinguishability via semidefinite programming, this new SDP based on $(n-1)$-incoherence is a bit simpler and lets us derive several new explicit bounds on antidistinguishability in the upcoming sections.

We illustrate how to make use of Theorem~\ref{thm:n_1_incoh} with an example that determines exactly which equiangular bases of $\C^n$ are antidistinguishable.

\begin{example}\label{exam:equi_angle}
    Let $0 \leq \gamma \leq 1$ be a real number and let $S = \{\ket{\psi_0}, \ket{\psi_1}, \ldots, \ket{\psi_{n-1}}\}$ be such that $\ip{\psi_i}{\psi_j} = \gamma$ whenever $i \neq j$. In other words, if $\mathbf{1}$ is the all-ones vector then the Gram matrix of $S$ is
    \[
        G = I + \gamma (\mathbf{1}\mathbf{1}^T - I) = \begin{bmatrix}
            1 & \gamma & \gamma & \cdots & \gamma \\
            \gamma & 1 & \gamma & \cdots & \gamma \\
            \gamma & \gamma & 1 & \cdots & \gamma \\
            \vdots & \vdots & \vdots & \ddots & \vdots \\
            \gamma & \gamma & \gamma & \cdots & 1
        \end{bmatrix}.
    \]
    We claim that $S$ is antidistinguishable if and only if $\gamma \leq (n-2)/(n-1)$. To demonstrate this claim, we show that $G$ is $(n-1)$-incoherent if and only if $\gamma \leq (n-2)/(n-1)$ and then apply Theorem~\ref{thm:n_1_incoh}.

    To verify that $S$ is antidistinguishable when $\gamma \leq (n-2)/(n-1)$, define
    \[
        F_i := \left(\frac{1}{n-1}-\frac{\gamma}{n-2}\right)\big(I - \kb{i}\big) + \frac{\gamma}{n-2}\big(\mathbf{1}-\ket{i}\big)\big(\mathbf{1}-\ket{i}\big)^T \quad \text{for all} \quad 0 \leq i \leq n-1.
    \]
    It is clear that $F_i \in \Pos(\C^n)$ for all $i$ (since $\gamma \leq (n-2)/(n-1)$), and direct calculation shows that
    \begin{align}\begin{split}\label{eq:G_gamma_decomp}
        \sum_{i=0}^{n-1} F_i & = \sum_{i=0}^{n-1} \left(\left(\frac{1}{n-1}-\frac{\gamma}{n-2}\right)\big(I - \kb{i}\big) + \frac{\gamma}{n-2}\big(\mathbf{1}-\ket{i}\big)\big(\mathbf{1}-\ket{i}\big)^T\right) \\
        & = \left(1 - \frac{\gamma(n-1)}{n-2}\right)I + \gamma \left(\mathbf{1}\mathbf{1}^T + \frac{1}{n-2}I\right) \\
        & = G.
    \end{split}\end{align}
    It follows that $G$ is $(n-1)$-incoherent, since each $F_i$ has its $i$-th row and column equal to $0$ (i.e., Equation~\eqref{eq:G_gamma_decomp} is a decomposition of $G$ into a sum of $(n-1) \times (n-1)$ PSD blocks).
    
    Now suppose that $\gamma > (n-2)/(n-1)$. To verify that $S$ is not antidistinguishable, let
    \[
        X = (n-2)I - (\mathbf{1}\mathbf{1}^T - I).
    \]
    Since each $(n-1) \times (n-1)$ principal submatrix of $X$ is diagonally dominant, $X$ is $(n-1)$-locally PSD. However,
    \begin{align*}
        \tr(XG) & = \tr\Big( \big((n-2)I - (\mathbf{1}\mathbf{1}^T - I)\big)\big(I + \gamma (\mathbf{1}\mathbf{1}^T - I)\big) \Big) \\
        & = n(n-2) - n(n-1)\gamma,
    \end{align*}
    which is strictly negative (since $\gamma > (n-2)/(n-1)$). It follows that $G$ is not $(n-1)$-incoherent.  

    The above calculation shows that the SDP~\eqref{eq:antidist-sdp2} has its optimal value equal to $0$ if and only if we have $\gamma \leq (n-2)/(n-1)$. By simply running that SDP numerically (and dividing the result by $n$), we can furthermore find the optimal (i.e., minimal) error probability when performing state exclusion on this set of states, which is plotted in Figure~\ref{fig:santa_graph}.

    \begin{figure}[!htb]
    	\centering
    	\def\x{\noexpand\x}    
    	\begin{tikzpicture}[scale=1.0,yscale=5,xscale=12.5]
    		\foreach \x in {0.1, 0.2, 0.3, 0.4, 0.5, 0.6, 0.7, 0.8, 0.9, 1} \draw[color=gray!25] (\x,0) -- (\x,1.05);
    		\foreach \y in {0.25, 0.5, 0.75, 1} \draw[color=gray!25] (0,\y) -- (1.02,\y);
    		
    		\draw (0.1,0.5pt) -- (0.1,-1pt) node[anchor=north] {\footnotesize 0.1};
    		\draw (0.2,0.5pt) -- (0.2,-1pt) node[anchor=north] {\footnotesize 0.2};
    		\draw (0.3,0.5pt) -- (0.3,-1pt) node[anchor=north] {\footnotesize 0.3};
    		\draw (0.4,0.5pt) -- (0.4,-1pt) node[anchor=north] {\footnotesize 0.4};
    		\draw (0.5,0.5pt) -- (0.5,-1pt) node[anchor=north] {\footnotesize 0.5};
    		\draw (0.6,0.5pt) -- (0.6,-1pt) node[anchor=north] {\footnotesize 0.6};
    		\draw (0.7,0.5pt) -- (0.7,-1pt) node[anchor=north] {\footnotesize 0.7};
    		\draw (0.8,0.5pt) -- (0.8,-1pt) node[anchor=north] {\footnotesize 0.8};
    		\draw (0.9,0.5pt) -- (0.9,-1pt) node[anchor=north] {\footnotesize 0.9};
    		\draw (1,0.5pt) -- (1,-1pt) node[anchor=north] {\footnotesize 1.0};
      
    		\draw (0.2pt,0.25) -- (-0.4pt,0.25) node[anchor=east] {\footnotesize 0.25};
    		\draw (0.2pt,0.5) -- (-0.4pt,0.5) node[anchor=east] {\footnotesize 0.50};
    		\draw (0.2pt,0.75) -- (-0.4pt,0.75) node[anchor=east] {\footnotesize 0.75};
    		\draw (0.2pt,1) -- (-0.4pt,1) node[anchor=east] {\footnotesize 1.00};
    		
    
    		\draw[color=myblue] (0,0) -- (0.00401,0) -- (0.00803,0.00003) -- (0.01204,0.00007) -- (0.01606,0.00012) -- (0.02007,0.00020) -- (0.02409,0.00029) -- (0.02810,0.00039) -- (0.03212,0.00051) -- (0.03614,0.00065) -- (0.04015,0.00080) -- (0.04417,0.00097) -- (0.04818,0.00116) -- (0.05220,0.00136) -- (0.05621,0.00158) -- (0.06023,0.00181) -- (0.06425,0.00206) -- (0.06826,0.00233) -- (0.07228,0.00261) -- (0.07629,0.00291) -- (0.08031,0.00323) -- (0.08432,0.00356) -- (0.08834,0.00391) -- (0.09236,0.00427) -- (0.09637,0.00465) -- (0.10039,0.00505) -- (0.10440,0.00546) -- (0.10842,0.00589) -- (0.11243,0.00634) -- (0.11645,0.00680) -- (0.12046,0.00728) -- (0.12448,0.00777) -- (0.12850,0.00829) -- (0.13251,0.00881) -- (0.13653,0.00936) -- (0.14054,0.00992) -- (0.14456,0.01050) -- (0.14857,0.01109) -- (0.15259,0.01171) -- (0.15661,0.01233) -- (0.16062,0.01298) -- (0.16464,0.01364) -- (0.16865,0.01432) -- (0.17267,0.01502) -- (0.17668,0.01573) -- (0.18070,0.01646) -- (0.18472,0.01720) -- (0.18873,0.01797) -- (0.19275,0.01875) -- (0.19676,0.01954) -- (0.20078,0.02036) -- (0.20479,0.02119) -- (0.20881,0.02204) -- (0.21283,0.02291) -- (0.21684,0.02379) -- (0.22086,0.02469) -- (0.22487,0.02561) -- (0.22889,0.02654) -- (0.23290,0.02750) -- (0.23692,0.02847) -- (0.24093,0.02945) -- (0.24495,0.03046) -- (0.24897,0.03148) -- (0.25298,0.03253) -- (0.25700,0.03358) -- (0.26101,0.03466) -- (0.26503,0.03576) -- (0.26904,0.03687) -- (0.27306,0.03800) -- (0.27708,0.03915) -- (0.28109,0.04032) -- (0.28511,0.04150) -- (0.28912,0.04270) -- (0.29314,0.04393) -- (0.29715,0.04517) -- (0.30117,0.04643) -- (0.30519,0.04770) -- (0.30920,0.04900) -- (0.31322,0.05031) -- (0.31723,0.05165) -- (0.32125,0.05300) -- (0.32526,0.05437) -- (0.32928,0.05576) -- (0.33330,0.05717) -- (0.33731,0.05860) -- (0.34133,0.06005) -- (0.34534,0.06152) -- (0.34936,0.06301) -- (0.35337,0.06451) -- (0.35739,0.06604) -- (0.36140,0.06759) -- (0.36542,0.06915) -- (0.36944,0.07074) -- (0.37345,0.07235) -- (0.37747,0.07397) -- (0.38148,0.07562) -- (0.38550,0.07729) -- (0.38951,0.07898) -- (0.39353,0.08069) -- (0.39755,0.08241) -- (0.40156,0.08417) -- (0.40558,0.08594) -- (0.40959,0.08773) -- (0.41361,0.08954) -- (0.41762,0.09138) -- (0.42164,0.09323) -- (0.42566,0.09511) -- (0.42967,0.09701) -- (0.43369,0.09893) -- (0.43770,0.10088) -- (0.44172,0.10284) -- (0.44573,0.10483) -- (0.44975,0.10684) -- (0.45376,0.10888) -- (0.45778,0.11093) -- (0.46180,0.11301) -- (0.46581,0.11511) -- (0.46983,0.11724) -- (0.47384,0.11939) -- (0.47786,0.12156) -- (0.48187,0.12376) -- (0.48589,0.12598) -- (0.48991,0.12822) -- (0.49392,0.13049) -- (0.49794,0.13278) -- (0.50195,0.13510) -- (0.50597,0.13745) -- (0.50998,0.13981) -- (0.51400,0.14221) -- (0.51802,0.14463) -- (0.52203,0.14707) -- (0.52605,0.14954) -- (0.53006,0.15204) -- (0.53408,0.15456) -- (0.53809,0.15711) -- (0.54211,0.15969) -- (0.54613,0.16229) -- (0.55014,0.16493) -- (0.55416,0.16759) -- (0.55817,0.17027) -- (0.56219,0.17299) -- (0.56620,0.17573) -- (0.57022,0.17851) -- (0.57423,0.18131) -- (0.57825,0.18414) -- (0.58227,0.18700) -- (0.58628,0.18989) -- (0.59030,0.19281) -- (0.59431,0.19576) -- (0.59833,0.19875) -- (0.60234,0.20176) -- (0.60636,0.20481) -- (0.61038,0.20789) -- (0.61439,0.21100) -- (0.61841,0.21414) -- (0.62242,0.21732) -- (0.62644,0.22053) -- (0.63045,0.22377) -- (0.63447,0.22705) -- (0.63849,0.23037) -- (0.64250,0.23371) -- (0.64652,0.23710) -- (0.65053,0.24052) -- (0.65455,0.24398) -- (0.65856,0.24747) -- (0.66258,0.25101) -- (0.66660,0.25458) -- (0.67061,0.25819) -- (0.67463,0.26184) -- (0.67864,0.26553) -- (0.68266,0.26926) -- (0.68667,0.27303) -- (0.69069,0.27685) -- (0.69470,0.28070) -- (0.69872,0.28460) -- (0.70274,0.28855) -- (0.70675,0.29254) -- (0.71077,0.29657) -- (0.71478,0.30065) -- (0.71880,0.30478) -- (0.72281,0.30896) -- (0.72683,0.31318) -- (0.73085,0.31746) -- (0.73486,0.32178) -- (0.73888,0.32616) -- (0.74289,0.33059) -- (0.74691,0.33507) -- (0.75092,0.33961) -- (0.75494,0.34421) -- (0.75896,0.34886) -- (0.76297,0.35357) -- (0.76699,0.35834) -- (0.77100,0.36317) -- (0.77502,0.36806) -- (0.77903,0.37302) -- (0.78305,0.37804) -- (0.78706,0.38313) -- (0.79108,0.38829) -- (0.79510,0.39352) -- (0.79911,0.39882) -- (0.80313,0.40419) -- (0.80714,0.40965) -- (0.81116,0.41518) -- (0.81517,0.42079) -- (0.81919,0.42648) -- (0.82321,0.43226) -- (0.82722,0.43813) -- (0.83124,0.44408) -- (0.83525,0.45014) -- (0.83927,0.45629) -- (0.84328,0.46253) -- (0.84730,0.46889) -- (0.85132,0.47535) -- (0.85533,0.48192) -- (0.85935,0.48861) -- (0.86336,0.49542) -- (0.86738,0.50235) -- (0.87139,0.50942) -- (0.87541,0.51662) -- (0.87943,0.52397) -- (0.88344,0.53146) -- (0.88746,0.53911) -- (0.89147,0.54693) -- (0.89549,0.55492) -- (0.89950,0.56309) -- (0.90352,0.57146) -- (0.90753,0.58003) -- (0.91155,0.58882) -- (0.91557,0.59784) -- (0.91958,0.60711) -- (0.92360,0.61664) -- (0.92761,0.62646) -- (0.93163,0.63660) -- (0.93564,0.64706) -- (0.93966,0.65790) -- (0.94368,0.66914) -- (0.94769,0.68082) -- (0.95171,0.69300) -- (0.95572,0.70574) -- (0.95974,0.71912) -- (0.96375,0.73322) -- (0.96777,0.74818) -- (0.97179,0.76415) -- (0.97580,0.78136) -- (0.97982,0.80012) -- (0.98383,0.82093) -- (0.98785,0.84460) -- (0.99186,0.87273) -- (0.99588,0.90936) -- (1,1); 
    		\node at (0.45,0.11) [rectangle,draw=white,rotate=10,fill=white] {\textcolor{myblue}{$n = 2$}};
    
    		\draw[color=myblue2] (0.50195,0) -- (0.50597,0.00005) -- (0.50998,0.00015) -- (0.51400,0.00029) -- (0.51802,0.00049) -- (0.52203,0.00073) -- (0.52605,0.00103) -- (0.53006,0.00137) -- (0.53408,0.00177) -- (0.53809,0.00222) -- (0.54211,0.00272) -- (0.54613,0.00327) -- (0.55014,0.00387) -- (0.55416,0.00453) -- (0.55817,0.00523) -- (0.56219,0.00600) -- (0.56620,0.00681) -- (0.57022,0.00768) -- (0.57423,0.00861) -- (0.57825,0.00959) -- (0.58227,0.01063) -- (0.58628,0.01172) -- (0.59030,0.01287) -- (0.59431,0.01408) -- (0.59833,0.01534) -- (0.60234,0.01667) -- (0.60636,0.01805) -- (0.61038,0.01949) -- (0.61439,0.02099) -- (0.61841,0.02255) -- (0.62242,0.02418) -- (0.62644,0.02586) -- (0.63045,0.02761) -- (0.63447,0.02942) -- (0.63849,0.03130) -- (0.64250,0.03324) -- (0.64652,0.03525) -- (0.65053,0.03732) -- (0.65455,0.03946) -- (0.65856,0.04167) -- (0.66258,0.04394) -- (0.66660,0.04629) -- (0.67061,0.04870) -- (0.67463,0.05119) -- (0.67864,0.05375) -- (0.68266,0.05639) -- (0.68667,0.05909) -- (0.69069,0.06188) -- (0.69470,0.06474) -- (0.69872,0.06768) -- (0.70274,0.07069) -- (0.70675,0.07379) -- (0.71077,0.07697) -- (0.71478,0.08023) -- (0.71880,0.08357) -- (0.72281,0.08700) -- (0.72683,0.09052) -- (0.73085,0.09412) -- (0.73486,0.09781) -- (0.73888,0.10160) -- (0.74289,0.10548) -- (0.74691,0.10945) -- (0.75092,0.11352) -- (0.75494,0.11769) -- (0.75896,0.12195) -- (0.76297,0.12633) -- (0.76699,0.13080) -- (0.77100,0.13538) -- (0.77502,0.14007) -- (0.77903,0.14487) -- (0.78305,0.14979) -- (0.78706,0.15482) -- (0.79108,0.15997) -- (0.79510,0.16525) -- (0.79911,0.17064) -- (0.80313,0.17617) -- (0.80714,0.18183) -- (0.81116,0.18762) -- (0.81517,0.19355) -- (0.81919,0.19963) -- (0.82321,0.20585) -- (0.82722,0.21223) -- (0.83124,0.21875) -- (0.83525,0.22544) -- (0.83927,0.23230) -- (0.84328,0.23933) -- (0.84730,0.24653) -- (0.85132,0.25392) -- (0.85533,0.26149) -- (0.85935,0.26927) -- (0.86336,0.27725) -- (0.86738,0.28544) -- (0.87139,0.29385) -- (0.87541,0.30250) -- (0.87943,0.31138) -- (0.88344,0.32052) -- (0.88746,0.32992) -- (0.89147,0.33960) -- (0.89549,0.34957) -- (0.89950,0.35985) -- (0.90352,0.37045) -- (0.90753,0.38140) -- (0.91155,0.39271) -- (0.91557,0.40440) -- (0.91958,0.41652) -- (0.92360,0.42907) -- (0.92761,0.44211) -- (0.93163,0.45566) -- (0.93564,0.46976) -- (0.93966,0.48448) -- (0.94368,0.49987) -- (0.94769,0.51599) -- (0.95171,0.53294) -- (0.95572,0.55081) -- (0.95974,0.56973) -- (0.96375,0.58986) -- (0.96777,0.61139) -- (0.97179,0.63459) -- (0.97580,0.65982) -- (0.97982,0.68761) -- (0.98383,0.71876) -- (0.98785,0.75460) -- (0.99186,0.79773) -- (0.99588,0.85479) -- (1,1); 
    		\node at (0.73,0.1) [rectangle,draw=white,rotate=16,fill=white] {\textcolor{myblue2}{$n = 3$}};
    
    		\draw[color=myblue3] (0.66660,0) -- (0.67061,0.00004) -- (0.67463,0.00019) -- (0.67864,0.00043) -- (0.68266,0.00078) -- (0.68667,0.00122) -- (0.69069,0.00177) -- (0.69470,0.00242) -- (0.69872,0.00318) -- (0.70274,0.00405) -- (0.70675,0.00503) -- (0.71077,0.00611) -- (0.71478,0.00731) -- (0.71880,0.00862) -- (0.72281,0.01005) -- (0.72683,0.01160) -- (0.73085,0.01326) -- (0.73486,0.01505) -- (0.73888,0.01696) -- (0.74289,0.01900) -- (0.74691,0.02116) -- (0.75092,0.02346) -- (0.75494,0.02588) -- (0.75896,0.02844) -- (0.76297,0.03114) -- (0.76699,0.03399) -- (0.77100,0.03697) -- (0.77502,0.04010) -- (0.77903,0.04338) -- (0.78305,0.04681) -- (0.78706,0.05040) -- (0.79108,0.05415) -- (0.79510,0.05806) -- (0.79911,0.06214) -- (0.80313,0.06639) -- (0.80714,0.07082) -- (0.81116,0.07543) -- (0.81517,0.08022) -- (0.81919,0.08521) -- (0.82321,0.09039) -- (0.82722,0.09577) -- (0.83124,0.10136) -- (0.83525,0.10716) -- (0.83927,0.11318) -- (0.84328,0.11943) -- (0.84730,0.12592) -- (0.85132,0.13265) -- (0.85533,0.13963) -- (0.85935,0.14687) -- (0.86336,0.15438) -- (0.86738,0.16218) -- (0.87139,0.17026) -- (0.87541,0.17866) -- (0.87943,0.18737) -- (0.88344,0.19641) -- (0.88746,0.20581) -- (0.89147,0.21557) -- (0.89549,0.22571) -- (0.89950,0.23626) -- (0.90352,0.24724) -- (0.90753,0.25866) -- (0.91155,0.27057) -- (0.91557,0.28299) -- (0.91958,0.29595) -- (0.92360,0.30949) -- (0.92761,0.32366) -- (0.93163,0.33851) -- (0.93564,0.35409) -- (0.93966,0.37047) -- (0.94368,0.38772) -- (0.94769,0.40594) -- (0.95171,0.42524) -- (0.95572,0.44574) -- (0.95974,0.46761) -- (0.96375,0.49106) -- (0.96777,0.51634) -- (0.97179,0.54380) -- (0.97580,0.57391) -- (0.97982,0.60735) -- (0.98383,0.64516) -- (0.98785,0.68908) -- (0.99186,0.74250) -- (0.99588,0.81401) -- (1,1); 
    		\node at (0.82,0.1) [rectangle,draw=white,rotate=26,fill=white] {\textcolor{myblue3}{$n = 4$}};
    
    		\draw[color=myblue4] (0.75092,0) -- (0.75494,0.00012) -- (0.75896,0.00040) -- (0.76297,0.00085) -- (0.76699,0.00148) -- (0.77100,0.00227) -- (0.77502,0.00325) -- (0.77903,0.00441) -- (0.78305,0.00575) -- (0.78706,0.00729) -- (0.79108,0.00902) -- (0.79510,0.01094) -- (0.79911,0.01307) -- (0.80313,0.01541) -- (0.80714,0.01797) -- (0.81116,0.02074) -- (0.81517,0.02373) -- (0.81919,0.02696) -- (0.82321,0.03043) -- (0.82722,0.03413) -- (0.83124,0.03809) -- (0.83525,0.04231) -- (0.83927,0.04680) -- (0.84328,0.05156) -- (0.84730,0.05660) -- (0.85132,0.06194) -- (0.85533,0.06758) -- (0.85935,0.07354) -- (0.86336,0.07982) -- (0.86738,0.08645) -- (0.87139,0.09342) -- (0.87541,0.10077) -- (0.87943,0.10850) -- (0.88344,0.11663) -- (0.88746,0.12518) -- (0.89147,0.13418) -- (0.89549,0.14363) -- (0.89950,0.15357) -- (0.90352,0.16403) -- (0.90753,0.17502) -- (0.91155,0.18660) -- (0.91557,0.19878) -- (0.91958,0.21162) -- (0.92360,0.22516) -- (0.92761,0.23945) -- (0.93163,0.25455) -- (0.93564,0.27053) -- (0.93966,0.28747) -- (0.94368,0.30546) -- (0.94769,0.32460) -- (0.95171,0.34504) -- (0.95572,0.36691) -- (0.95974,0.39043) -- (0.96375,0.41583) -- (0.96777,0.44342) -- (0.97179,0.47362) -- (0.97580,0.50698) -- (0.97982,0.54433) -- (0.98383,0.58689) -- (0.98785,0.63676) -- (0.99186,0.69795) -- (0.99588,0.78073) -- (1,1); 
    
    		\draw[color=myblue5] (0.79911,0) -- (0.80313,0.00007) -- (0.80714,0.00038) -- (0.81116,0.00095) -- (0.81517,0.00178) -- (0.81919,0.00287) -- (0.82321,0.00424) -- (0.82722,0.00589) -- (0.83124,0.00782) -- (0.83525,0.01006) -- (0.83927,0.01260) -- (0.84328,0.01546) -- (0.84730,0.01865) -- (0.85132,0.02217) -- (0.85533,0.02605) -- (0.85935,0.03029) -- (0.86336,0.03490) -- (0.86738,0.03991) -- (0.87139,0.04532) -- (0.87541,0.05115) -- (0.87943,0.05743) -- (0.88344,0.06416) -- (0.88746,0.07138) -- (0.89147,0.07911) -- (0.89549,0.08736) -- (0.89950,0.09618) -- (0.90352,0.10558) -- (0.90753,0.11561) -- (0.91155,0.12629) -- (0.91557,0.13768) -- (0.91958,0.14982) -- (0.92360,0.16276) -- (0.92761,0.17656) -- (0.93163,0.19129) -- (0.93564,0.20703) -- (0.93966,0.22386) -- (0.94368,0.24189) -- (0.94769,0.26125) -- (0.95171,0.28208) -- (0.95572,0.30457) -- (0.95974,0.32893) -- (0.96375,0.35544) -- (0.96777,0.38446) -- (0.97179,0.41645) -- (0.97580,0.45207) -- (0.97982,0.49223) -- (0.98383,0.53836) -- (0.98785,0.59282) -- (0.99186,0.66022) -- (0.99588,0.75226) -- (1,1); 
    
    		\draw[color=myblue6] (0.83124,0) -- (0.83525,0.00003) -- (0.83927,0.00037) -- (0.84328,0.00106) -- (0.84730,0.00212) -- (0.85132,0.00356) -- (0.85533,0.00538) -- (0.85935,0.00762) -- (0.86336,0.01027) -- (0.86738,0.01336) -- (0.87139,0.01690) -- (0.87541,0.02092) -- (0.87943,0.02543) -- (0.88344,0.03045) -- (0.88746,0.03601) -- (0.89147,0.04214) -- (0.89549,0.04885) -- (0.89950,0.05619) -- (0.90352,0.06419) -- (0.90753,0.07289) -- (0.91155,0.08232) -- (0.91557,0.09253) -- (0.91958,0.10358) -- (0.92360,0.11552) -- (0.92761,0.12842) -- (0.93163,0.14235) -- (0.93564,0.15741) -- (0.93966,0.17368) -- (0.94368,0.19130) -- (0.94769,0.21039) -- (0.95171,0.23112) -- (0.95572,0.25369) -- (0.95974,0.27834) -- (0.96375,0.30539) -- (0.96777,0.33523) -- (0.97179,0.36838) -- (0.97580,0.40555) -- (0.97982,0.44779) -- (0.98383,0.49665) -- (0.98785,0.55478) -- (0.99186,0.62728) -- (0.99588,0.72717) -- (1,1); 
    
    		\draw[color=myblue7] (0.85533,0) -- (0.85935,0.00006) -- (0.86336,0.00055) -- (0.86738,0.00151) -- (0.87139,0.00297) -- (0.87541,0.00495) -- (0.87943,0.00746) -- (0.88344,0.01055) -- (0.88746,0.01422) -- (0.89147,0.01850) -- (0.89549,0.02344) -- (0.89950,0.02906) -- (0.90352,0.03540) -- (0.90753,0.04249) -- (0.91155,0.05040) -- (0.91557,0.05916) -- (0.91958,0.06883) -- (0.92360,0.07949) -- (0.92761,0.09119) -- (0.93163,0.10402) -- (0.93564,0.11807) -- (0.93966,0.13347) -- (0.94368,0.15032) -- (0.94769,0.16879) -- (0.95171,0.18905) -- (0.95572,0.21132) -- (0.95974,0.23587) -- (0.96375,0.26303) -- (0.96777,0.29325) -- (0.97179,0.32707) -- (0.97580,0.36530) -- (0.97982,0.40905) -- (0.98383,0.46004) -- (0.98785,0.52115) -- (0.99186,0.59794) -- (0.99588,0.70463) -- (1,1); 
    
    		\draw[color=myblue8] (0.87541,0) -- (0.87943,0.00035) -- (0.88344,0.00132) -- (0.88746,0.00292) -- (0.89147,0.00519) -- (0.89549,0.00816) -- (0.89950,0.01186) -- (0.90352,0.01635) -- (0.90753,0.02165) -- (0.91155,0.02783) -- (0.91557,0.03493) -- (0.91958,0.04301) -- (0.92360,0.05215) -- (0.92761,0.06242) -- (0.93163,0.07392) -- (0.93564,0.08673) -- (0.93966,0.10099) -- (0.94368,0.11682) -- (0.94769,0.13440) -- (0.95171,0.15391) -- (0.95572,0.17558) -- (0.95974,0.19972) -- (0.96375,0.22667) -- (0.96777,0.25691) -- (0.97179,0.29105) -- (0.97580,0.32994) -- (0.97982,0.37477) -- (0.98383,0.42741) -- (0.98785,0.49095) -- (0.99186,0.57140) -- (0.99588,0.68408) -- (1,1); 
    
    		\draw[color=myblue9] (0.88746,0) -- (0.89147,0.00015) -- (0.89549,0.00100) -- (0.89950,0.00265) -- (0.90352,0.00512) -- (0.90753,0.00847) -- (0.91155,0.01276) -- (0.91557,0.01803) -- (0.91958,0.02435) -- (0.92360,0.03181) -- (0.92761,0.04047) -- (0.93163,0.05044) -- (0.93564,0.06182) -- (0.93966,0.07474) -- (0.94368,0.08934) -- (0.94769,0.10581) -- (0.95171,0.12434) -- (0.95572,0.14519) -- (0.95974,0.16866) -- (0.96375,0.19514) -- (0.96777,0.22513) -- (0.97179,0.25929) -- (0.97580,0.29851) -- (0.97982,0.34409) -- (0.98383,0.39800) -- (0.98785,0.46355) -- (0.99186,0.54714) -- (0.99588,0.66514) -- (1,1); 
      
    		\draw[thick,-to] (-0.02,0) -- (1.02,0) node[anchor=west]{$\gamma$};
    		\draw[thick,-to] (0,-0.05) -- (0,1.05) node[anchor=south]{SDP value};
    	\end{tikzpicture}
    	\caption{Plot of the relationship between $\gamma$ and the optimal SDP value when performing state exclusion on the set of states described by Example~\ref{exam:equi_angle}, for $2 \leq n \leq 10$. 
        This value equals $0$ if and only if the set is antidistinguishable, which happens exactly when $\gamma \leq (n-2)/(n-1)$. Dividing the SDP value by $n$ yields the optimal error probability in the state exclusion task, when all of the states are chosen uniformly at random.
        }\label{fig:santa_graph}
    \end{figure}
\end{example}

\section{Necessary conditions for antidistinguishability}\label{sec:upper_bounds}

While antidistinguishability of a set can be checked via semidefinite programming, it is useful to have necessary and/or sufficient conditions for antidistinguishability that are even easier to make use of (e.g., conditions that rely only on elementary linear algebra, or on quantities that have a natural physical interpretation). In this section, we present a pair of necessary conditions for antidistinguishability that involve just inequalities of the inner products of the pure states. In particular, if these inner products are sufficiently large then the set cannot be antidistinguishable:

\begin{theorem}\label{thm:non-antidist-threshold}
    Let $n \geq 2$ be an integer and let $S = \{\ket{\psi_0}, \ket{\psi_1}, \ldots, \ket{\psi_{n-1}} \}$. If
    \begin{equation}\label{eq:inner-product-non-antidist-threshold}
        \sum_{i \neq j = 0}^{n-1} \big|\ip{\psi_i}{\psi_j}\big| > n(n-2)
    \end{equation}
    then $S$ is not antidistinguishable.
\end{theorem}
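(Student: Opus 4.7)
My plan is to reduce the statement to a question about the Gram matrix $G$ of $S$ and then exhibit an explicit separating hyperplane. By Theorem~\ref{thm:n_1_incoh}, $S$ fails to be antidistinguishable precisely when $G$ is not $(n-1)$-incoherent, and by Definition~\ref{defn:k_fac_pos} this is witnessed by producing an $(n-1)$-locally PSD matrix $Y$ with $\tr(YG) < 0$. The construction in Example~\ref{exam:equi_angle}, which handled the equiangular (real, constant) case via $Y = (n-2)I - (\mathbf{1}\mathbf{1}^T - I)$, is a natural template, but in general $G$ has complex entries of varying magnitudes, so the witness must be modified to absorb the phases.

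The key step is to choose $Y$ with diagonal entries $Y_{ii} = n-2$ and, for $i \neq j$, off-diagonal entries of unit modulus whose phases align with those of $-G_{ji}$. Concretely, write $G_{ij} = |G_{ij}|\, e^{i\theta_{ij}}$ (so $\theta_{ji} = -\theta_{ij}$ since $G$ is Hermitian) and set $Y_{ij} = -e^{i\theta_{ij}}$ for $i \neq j$; then $Y_{ji} = \overline{Y_{ij}}$, so $Y$ is Hermitian, and $Y_{ij} G_{ji} = -|G_{ij}|$ for every $i \neq j$.

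Next I would verify that $Y$ is $(n-1)$-locally PSD via Definition~\ref{defn:k_fac_pos}(c). Any $(n-1) \times (n-1)$ principal submatrix of $Y$ is a Hermitian matrix whose diagonal entries all equal $n-2$ and whose $n-2$ off-diagonal entries in each row have absolute value $1$. By Gershgorin's circle theorem, every (real) eigenvalue $\lambda$ of such a submatrix satisfies $\lambda \geq (n-2) - (n-2) = 0$, so the submatrix is PSD.

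Finally, I would compute
\begin{equation*}
\tr(YG) \;=\; \sum_{i=0}^{n-1} Y_{ii} G_{ii} \;+\; \sum_{i \neq j} Y_{ij} G_{ji} \;=\; n(n-2) \;-\; \sum_{i \neq j} |\ip{\psi_i}{\psi_j}|,
\end{equation*}
which is strictly negative by the hypothesis~\eqref{eq:inner-product-non-antidist-threshold}. Hence $G$ is not $(n-1)$-incoherent and $S$ is not antidistinguishable. The only even mildly nontrivial part of the argument is choosing the right unimodular phases in $Y$ so that the $\tr(YG)$ calculation collapses to a sum of $|\ip{\psi_i}{\psi_j}|$ while Gershgorin still certifies that $Y$ is $(n-1)$-locally PSD; everything else is routine bookkeeping.
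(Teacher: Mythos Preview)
Your proof is correct and essentially identical to the paper's: both construct the same witness $Y$ with diagonal entries $n-2$ and unimodular off-diagonals carrying the negated phases of $G$, then use diagonal dominance (equivalently, Gershgorin) on the $(n-1)\times(n-1)$ principal submatrices to certify $(n-1)$-local PSDness, and compute $\tr(YG) = n(n-2) - \sum_{i\neq j}|\ip{\psi_i}{\psi_j}|$. The only cosmetic difference is that the paper packages the witness as $Y = (n-1)I - E$ with $E$ the matrix of phases, while you write out the entries directly.
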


We note that the above theorem also follows from the work  in~\cite{bandyopadhyay2014conclusive} which considered the more general task of quantum state exclusion of mixed states. We give an alternative proof using Theorem~\ref{thm:n_1_incoh} in just a couple of lines, to demonstrate how simple antidistinguishability of pure states is to work with via our $(n-1)$-incoherence machinery.
 
\begin{proof}[Proof of Theorem~\ref{thm:non-antidist-threshold}]
    This result follows via an argument that is similar to the one used in the latter half of Example~\ref{exam:equi_angle}. Define
    \[
        Y = (n-1)I - E,
    \]
    where, for all $0 \leq i,j \leq n-1$ (even if $i = j$), the $(i,j)$-entry of $E$ is the complex number with modulus $1$ and phase equal to that of $\ip{\psi_i}{\psi_j}$. Since each $(n-1) \times (n-1)$ principal submatrix of $Y$ is diagonally dominant, $Y$ is $(n-1)$-locally PSD. However,
    \begin{align*}
        \tr(YG) & = \tr\Big( \big((n-1)I - E\big)G \Big) = n(n-2) - \sum_{i\neq j=0}^{n-1}\big|\ip{\psi_i}{\psi_j}\big|,
    \end{align*}
    which is strictly negative whenever Inequality~\eqref{eq:inner-product-non-antidist-threshold} holds. It follows that $G$ is not $(n-1)$-incoherent, so Theorem~\ref{thm:n_1_incoh} tells us that $S$ is not antidistinguishable.
\end{proof}

By noting that there are $n(n-1)$ terms in the sum~\eqref{eq:inner-product-non-antidist-threshold}, the above theorem immediately implies the following special case:

\begin{corollary}\label{cor:large_ip_not_anti}
    Let $n \geq 2$ be an integer and let $S = \{\ket{\psi_0}, \ket{\psi_1}, \ldots, \ket{\psi_{n-1}} \}$. If
    \begin{equation}\label{eq:inner-product-non-antidist-individual}
        \big|\ip{\psi_i}{\psi_j}\big| > \frac{n-2}{n-1} \quad \text{for all} \quad 0 \leq i \neq j \leq n-1
    \end{equation}
    then $S$ is not antidistinguishable.
\end{corollary}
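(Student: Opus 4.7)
The plan is to derive this as a direct consequence of Theorem~\ref{thm:non-antidist-threshold} by a simple counting argument on the sum appearing in Inequality~\eqref{eq:inner-product-non-antidist-threshold}.

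First, I would observe that the double sum $\sum_{i \neq j = 0}^{n-1} |\ip{\psi_i}{\psi_j}|$ ranges over all ordered pairs $(i,j)$ with $i \neq j$, and there are exactly $n(n-1)$ such pairs. Next, I would invoke the hypothesis~\eqref{eq:inner-product-non-antidist-individual}, which asserts that every single term $|\ip{\psi_i}{\psi_j}|$ in that sum strictly exceeds $(n-2)/(n-1)$. Summing these $n(n-1)$ strict inequalities term by term then yields
\begin{equation*}
    \sum_{i \neq j = 0}^{n-1} \big|\ip{\psi_i}{\psi_j}\big| > n(n-1) \cdot \frac{n-2}{n-1} = n(n-2),
\end{equation*}
which is precisely Inequality~\eqref{eq:inner-product-non-antidist-threshold}. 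Applying Theorem~\ref{thm:non-antidist-threshold} then immediately concludes that $S$ is not antidistinguishable.

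There is essentially no obstacle here; the result is a trivial corollary obtained by replacing the aggregate sum bound of Theorem~\ref{thm:non-antidist-threshold} with a uniform pointwise bound on the individual inner products. The only thing worth being mildly careful about is that the inequality $|\ip{\psi_i}{\psi_j}| > (n-2)/(n-1)$ is strict, so the summed inequality remains strict and meets the strict requirement in~\eqref{eq:inner-product-non-antidist-threshold}. The corollary is stated this way because it packages the theorem into a form that is more immediately comparable to the conjecture from \cite{havlivcek2020simple} discussed in the introduction.
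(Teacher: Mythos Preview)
Your proposal is correct and matches the paper's own argument essentially verbatim: the paper simply remarks that there are $n(n-1)$ terms in the sum~\eqref{eq:inner-product-non-antidist-threshold}, so Theorem~\ref{thm:non-antidist-threshold} immediately yields the corollary.
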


We note that Example~\ref{exam:equi_angle} demonstrates that the inequalities described by Theorem~\ref{thm:non-antidist-threshold} and Corollary~\ref{cor:large_ip_not_anti} are both tight: for all $n$, there is an antidistinguishable set with $\big|\ip{\psi_i}{\psi_j}\big| = (n-2)/(n-1)$ and thus $\sum_{i \neq j = 0}^{n-1} \big|\ip{\psi_i}{\psi_j}\big| = n(n-2)$. In the $n = 3$ case, the trine states from Example~\ref{exam:trine} also demonstrate tightness of these bounds, as they are antidistinguishable with $\big|\ip{\psi_i}{\psi_j}\big| = (n-2)/(n-1) = 1/2$ for all $i \neq j$.

\section{Sufficient conditions for antidistinguishability}\label{sec:lower_bounds}

We now present some sufficient conditions for antidistinguishability that are simpler to use than any of the semidefinite programs we have described. Much like the results of Section~\ref{sec:upper_bounds} showed that if the states' inner products are sufficiently large, then the set cannot be antidistinguishable, in this section we show that if the inner products are sufficiently small, then the set \emph{must} be antidistinguishable. In particular, one of these sufficient conditions (Corollary~\ref{cor:antidist_by_IP}) can be thought of as a ``corrected version'' of the recently-disproved conjecture~\cite{russo2023inner} from \cite{havlivcek2020simple}. 
For completeness, their conjecture was for an integer $n \geq 2$ and $S = \{\ket{\psi_0}, \ket{\psi_1}, \ldots, \ket{\psi_{n-1}} \}$ if
\begin{equation}\label{eq:inner-product-non-antidist-individual-incorrect}
    \big|\ip{\psi_i}{\psi_j}\big| \leq \frac{n-2}{n-1} \quad \text{for all} \quad 0 \leq i \neq j \leq n-1
\end{equation}
then $S$ is antidistinguishable. 

As our first step towards correcting this conjecture (in particular, placing a correct quantity on the right-hand-side of Inequality~\eqref{eq:inner-product-non-antidist-individual-incorrect}), we present a sufficient condition for antidistinguishability in terms of the eigenvalues of the set's Gram matrix. Remarkably, this sufficient condition is also necessary for circulant sets:
 
\begin{theorem}\label{thm:antidist_by_eigs}
    Let $n \geq 2$ be an integer and let $G \in \Pos(\C^n)$ be the Gram matrix of a set $S$ of $n$ pure states, and let $\lambda_0 \geq \lambda_1 \geq \cdots \geq \lambda_{n-1}$ be the eigenvalues of $G$. If
    \begin{align}\label{eq:eig_ineq_for_antidist}
        \sqrt{\lambda_0} \leq \sum_{j=1}^{n-1} \sqrt{\lambda_j}
    \end{align}
    then $S$ is antidistinguishable. Furthermore, if $G$ is circulant, then Inequality~\eqref{eq:eig_ineq_for_antidist} is necessary and sufficient for the antidistinguishability of $S$.
\end{theorem}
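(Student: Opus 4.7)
The plan is to apply Theorem~\ref{thm:n_1_incoh} throughout, reducing questions about antidistinguishability of $S$ to $(n-1)$-incoherence of $G$. A useful algebraic reformulation is that the polygon condition $\sqrt{\lambda_0} \leq \sum_{j\geq 1}\sqrt{\lambda_j}$ holds if and only if there exist complex numbers $z_0, \ldots, z_{n-1}$ of unit modulus with $\sum_k z_k\sqrt{\lambda_k} = 0$ (the planar closed-polygon inequality). I would use these phases to build explicit decompositions.

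For the circulant case I would first establish the sufficient direction. Writing $G = F\Lambda F^*$ via Proposition~\ref{prop:circulant}, I define $\ket{h_0} := \sum_k \tfrac{z_k}{\sqrt{n}}\sqrt{\lambda_k}\ket{f_k}$ and $\ket{h_\ell} := (P^*)^\ell \ket{h_0}$. Since $\braket{0}{f_k} = 1/\sqrt{n}$ for all $k$, one gets $\braket{0}{h_0} = \tfrac{1}{n}\sum_k z_k\sqrt{\lambda_k} = 0$, and the cyclic identity $(P^*)^\ell\ket{f_k} = \omega^{-\ell k}\ket{f_k}$ gives $\braket{\ell}{h_\ell} = \braket{0}{h_0} = 0$. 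A short calculation using Fourier orthogonality then yields $\sum_\ell \ketbra{h_\ell}{h_\ell} = \sum_k |z_k|^2\lambda_k\ketbra{f_k}{f_k} = G$, so $F_\ell := \ketbra{h_\ell}{h_\ell}$ provides the $(n-1)$-incoherent decomposition. For the necessary direction, assume polygon fails. By Lemma~\ref{lem:circulant_n1_inc} it suffices to find a circulant $(n-1)$-locally PSD $Y$ with $\tr(GY) < 0$. Parameterising $Y = \sum_k y_k\ketbra{f_k}{f_k}$, the $(n-1)$-local PSD condition becomes $\sum_k y_k p_k \geq 0$ for every probability vector $(p_k)$ whose square-roots satisfy the polygon inequality. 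Polygon failure for $(\sqrt{\lambda_k})$ says exactly that $(\lambda_k/n)$ is not in this convex ``polygon-feasible'' set, and a separating hyperplane produces the $(y_k)$'s we want.

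For the sufficient direction at a general $G$, I would construct vectors $\ket{h_0}, \ldots, \ket{h_{n-1}} \in \C^n$ satisfying $\braket{\ell}{h_\ell} = 0$ and $\sum_\ell \ketbra{h_\ell}{h_\ell} = G$. Writing $G = V\Lambda V^*$ and parameterising $\ket{h_\ell} = V\Lambda^{1/2}U\ket{\ell}$ for a unitary $U \in \Unitary(\C^n)$, the latter condition holds automatically, while the former becomes the requirement that $V\Lambda^{1/2}U$ has zero diagonal. Equivalently, for each $\ell$ the vector $U\ket{\ell}$ must be orthogonal to the unit vector $\ket{b_\ell} := \Lambda^{1/2}V^*\ket{\ell}$; one checks $\sum_\ell \ketbra{b_\ell}{b_\ell} = \Lambda$, so the $\ket{b_\ell}$'s encode precisely the spectral data of $G$. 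My goal would be to prove that the polygon condition is exactly the feasibility criterion for the existence of an orthonormal basis $\{U\ket{\ell}\}$ with $U\ket{\ell} \perp \ket{b_\ell}$.

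The main obstacle is this last step. The circulant case offers a clean template via Fourier/cyclic symmetry (there the $\ket{b_\ell}$'s are cyclic shifts of a single vector), but extending to arbitrary $V$ is subtle: unitary conjugation that maps $G$ to a circulant $\widetilde{G}$ with the same spectrum does not preserve $(n-1)$-incoherence, so this reduction is not direct. I expect the proof either constructs $U$ by a continuity/deformation argument starting from the circulant case, or invokes a characterisation of $(n-1)$-incoherence in terms of eigenvalues (plausibly from~\cite{JMPP22}) that bypasses an explicit construction of $U$ altogether.
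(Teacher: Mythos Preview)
Your circulant arguments are correct and more self-contained than the paper's. For sufficiency you build an explicit rank-one $(n-1)$-incoherent decomposition of $G$ from closed-polygon phases, which is clean and constructive. For necessity, your key assertion---that a circulant $Y=\sum_k y_k\ketbra{f_k}{f_k}$ is $(n-1)$-locally PSD iff $\sum_k y_k p_k\geq 0$ for every polygon-feasible $p$---is true and not circular: a vector $\ket{v}$ with $v_0=0$ is, in the Fourier basis, $\sum_k c_k\ket{f_k}$ with $\sum_k c_k=0$, so $\bra{v}Y\ket{v}=\sum_k y_k|c_k|^2$, and the tuples $(|c_k|^2)$ arising this way are \emph{exactly} the polygon-feasible ones. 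The one genuine lemma you asserted without proof is that the polygon cone $C=\{p\geq 0:\sqrt{p_k}\leq\sum_{j\neq k}\sqrt{p_j}\ \forall k\}$ is convex; without it, a separating hyperplane only shows $\lambda\notin\overline{\mathrm{conv}}(C)$, not $\lambda\notin C$. Convexity does hold: $\mathrm{conv}(C)$ is the set of diagonals of PSD matrices $A$ with $A\mathbf{1}=0$, and for any such $A$ the row-sum constraint $A_{kk}=-\sum_{j\neq k}A_{kj}$ combined with $|A_{kj}|\leq\sqrt{A_{kk}A_{jj}}$ forces $\sqrt{A_{kk}}\leq\sum_{j\neq k}\sqrt{A_{jj}}$, so $\mathrm{conv}(C)\subseteq C$. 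The paper takes a different route here: it imports \cite[Theorem~2]{JMPP22} (a circulant $Y$ is $(n-1)$-locally PSD iff all elementary symmetric polynomials of its eigenvalues are nonnegative) and \cite[Proposition~4.2]{Zin08} to describe the dual cone, then extracts the polygon inequality by a discriminant argument. Your approach is elementary and direct; the paper's exposes the link to hyperbolicity cones.

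For the general sufficient direction, the paper does exactly what you anticipated in your final sentence: it invokes \cite[Theorem~8]{JMPP22}, which guarantees that any $G\in\Pos(\C^n)$ with ordered eigenvalues $\lambda_0\geq\cdots\geq\lambda_{n-1}$ is $(n-1)$-incoherent provided there exists $\Lambda\in\Pos(\R^n)$ with $\Lambda_{jj}=\lambda_j$ for $j\geq 1$ and $\lambda_0=-\Lambda_{00}-2\sum_{j\geq 1}\Lambda_{0j}$. The paper then exhibits $\Lambda=\mathbf{v}\mathbf{v}^T$ with $v_j=\sqrt{\lambda_j}$ for $j\geq 1$ and $v_0=-q-\sqrt{q^2-\lambda_0}$, where $q=\sum_{j\geq 1}\sqrt{\lambda_j}$; realness of $v_0$ is precisely the polygon hypothesis. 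Your zero-diagonal reformulation via a unitary $U$ is a correct sufficient mechanism, but the paper never constructs $U$---that existence question is entirely absorbed into the cited result.
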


\begin{proof}
    Define $q := \sum_{j=1}^{n-1} \sqrt{\lambda_j}$ and suppose that $\sqrt{\lambda_0} \leq q$. Our goal is to show that $S$ is antidistinguishable. It was shown in \cite[Theorem~8]{JMPP22} that if there exists a real matrix $\Lambda \in \Pos(\R^n)$ such that
    \begin{align}\label{eq:abs_d1_incohtest}
        \lambda_0 & = -\Lambda_{0,0} - \sum_{i=1}^{n-1} \left(\Lambda_{0,i} + \Lambda_{i,0}\right) \quad \text{and} \quad \lambda_j = \Lambda_{j,j} \ \ \ \text{for ${} \ 1 \leq j \leq n-1$}
    \end{align}
    then $G$ is $(n-1)$-incoherent, so (by Theorem~\ref{thm:n_1_incoh}) $S$ is antidistinguishable. It thus suffices to find such a $\Lambda$.
    
    To this end, define a vector $\mathbf{v} \in \R^n$ by
    \[
        v_0 = -q - \sqrt{q^2 - \lambda_0} \quad \text{and} \quad v_j = \sqrt{\lambda_j} \ \ \ \text{for ${} \ 1 \leq j \leq n-1$}
    \]
    (the hypothesis $\sqrt{\lambda_0} \leq q$ was used here to ensure that $v_0$ is real). It is then straightforward to check that the positive semidefinite matrix $\Lambda = \mathbf{v}\mathbf{v}^T$ satisfies Equation~\eqref{eq:abs_d1_incohtest}, which completes the proof that Inequality~\eqref{eq:eig_ineq_for_antidist} implies antidistinguishability of $S$.

    For the ``furthermore'' statement, suppose that $G$ is circulant. Our goal is to show that $S$ being antidistinguishable is equivalent to Inequality~\eqref{eq:eig_ineq_for_antidist} holding. To this end, recall from Theorem~\ref{thm:n_1_incoh} that $S$ is antidistinguishable if and only if $G$ is $(n-1)$-incoherent, which (by Lemma~\ref{lem:circulant_n1_inc}) is equivalent to
    \begin{align}\label{ineq:n1_incoh_circ_dual}
        \tr(GY) \geq 0
    \end{align}
    for all circulant $(n-1)$-locally PSD matrices $Y$. Well, Proposition~\ref{prop:circulant}(c) tells us that a matrix $Y$ is circulant if and only if $Y = F\mathrm{diag}(\mathbf{d})F^*$, where $\mathbf{d}$ is the vector of eigenvalues of $Y$. Furthermore, it was shown in~\cite[Theorem~2]{JMPP22} that a circulant matrix $Y$ is $(n-1)$-locally PSD if and only if $S_{k}(\mathbf{d}) \geq 0$ for all $1 \leq k \leq n-1$, where $S_{k}$ is the $k$-th elementary symmetric polynomial
    \[
        S_k(\mathbf{d}) := \sum_{0 \leq j_1 < \cdots < j_k < n} d_{j_1}d_{j_2}\cdots d_{j_k}.
    \]
    
    Since $G$ is circulant, we can write $G = F\mathrm{diag}(\bm{\lambda})F^*$ for some vector $\bm{\lambda} = (\lambda_0, \ldots, \lambda_{n-1})$ whose entries are the eigenvalues of $G$. It follows that Inequality~\eqref{ineq:n1_incoh_circ_dual} is equivalent to
    \[
        0 \leq \tr(GY) = \tr\big((F\mathrm{diag}(\bm{\lambda})F^*)(F\mathrm{diag}(\mathbf{d})F^*)\big) = \tr\big(\mathrm{diag}(\bm{\lambda})\mathrm{diag}(\mathbf{d})\big) = \bm{\lambda} \cdot \mathbf{d}.
    \]
    In other words, $S$ is antidistinguishable if and only if $\bm{\lambda}$ is in the dual cone of the set of vectors $\mathbf{d}$ satisfying $S_{k}(\mathbf{d}) \geq 0$ for all $1 \leq k \leq n-1$. This dual cone was characterized in \cite[Proposition~4.2]{Zin08}, and $\bm{\lambda}$ being in this dual cone implies the existence of $\Lambda \in \Pos(\R^n)$ satisfying Equation~\eqref{eq:abs_d1_incohtest}. Since $\Lambda$ is positive semidefinite, so are all of its $2 \times 2$ principal submatrices, so $\Lambda_{0,0}\Lambda_{j,j} \geq \Lambda_{0,j}^2$ for all $1 \leq j \leq n-1$. Substituting this into Equation~\eqref{eq:abs_d1_incohtest} gives
    \[
      \Lambda_{0,0} = -\lambda_0 - 2\sum_{j=1}^{n-1}\Lambda_{0,j} \leq -\lambda_0 + 2\sqrt{\Lambda_{0,0}}\sum_{j=1}^{n-1}\sqrt{\lambda_{j}}.
    \]
    This is a quadratic inequality in $\sqrt{\Lambda_{0,0}}$, which (via the discriminant of the quadratic formula) has a real solution if and only if
    \[
      \left(\sum_{j=1}^{n-1}\sqrt{\lambda_{j}}\right)^2 - \lambda_0 \geq 0,
    \]
    which implies $\sqrt{\lambda_0} \leq \sum_{j=1}^{n-1} \sqrt{\lambda_j}$.
\end{proof}

\begin{remark}\label{rem:abs_n1_no_sdp}
    The proof of Theorem~\ref{thm:antidist_by_eigs} shows something that was overlooked in \cite{Zin08,JMPP22}: the existence of $\Lambda \in \Pos(\R^n)$ satisfying the constraints~\eqref{eq:abs_d1_incohtest} can be determined without semidefinite programming. Inequality~\eqref{eq:eig_ineq_for_antidist} is both necessary and sufficient for the existence of such a $\Lambda$.
\end{remark}

Our first corollary of Theorem~\ref{thm:antidist_by_eigs} gives a sufficient condition for antidistinguishability in terms of the Frobenius norm $\|G\|_{\textup{F}}$ of $G$, which is slightly easier to compute than its eigenvalues.

\begin{corollary}\label{cor:antidist_by_frob}
    Let $n \geq 2$ be an integer and let $G$ be the Gram matrix of a set $S$ of $n$ pure states. If
    \begin{align*}
        \|G\|_{\textup{F}} \leq \frac{n}{\sqrt{2}}
    \end{align*}
    then $S$ is antidistinguishable.
\end{corollary}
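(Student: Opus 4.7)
The plan is to reduce the claim to the sufficient condition in Theorem~\ref{thm:antidist_by_eigs}: it suffices to show that the eigenvalues $\lambda_0 \geq \lambda_1 \geq \cdots \geq \lambda_{n-1}$ of $G$ satisfy $\sqrt{\lambda_0} \leq \sum_{j=1}^{n-1} \sqrt{\lambda_j}$. Two pieces of spectral data are available. First, since $G$ is the Gram matrix of unit vectors, its diagonal entries are all $1$, so $\sum_j \lambda_j = \tr(G) = n$. Second, the hypothesis $\|G\|_{\textup{F}} \leq n/\sqrt{2}$ translates to $\sum_j \lambda_j^2 \leq n^2/2$.

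I would split the verification into two cases according to the size of $\lambda_0$. If $\lambda_0 \leq n/2$, the naive bound $(\sum_{j=1}^{n-1}\sqrt{\lambda_j})^2 \geq \sum_{j=1}^{n-1}\lambda_j = n - \lambda_0 \geq \lambda_0$ is enough. The interesting case is $\lambda_0 > n/2$. Here I would expand the square
\[
    \left(\sum_{j=1}^{n-1}\sqrt{\lambda_j}\right)^{\!2} = \sum_{j=1}^{n-1}\lambda_j + 2\sum_{1\leq i<j\leq n-1}\sqrt{\lambda_i\lambda_j}
\]
and use the identity $2\sum_{i<j}\lambda_i\lambda_j = (\sum_{j\geq 1}\lambda_j)^2 - \sum_{j\geq 1}\lambda_j^2$ together with the Frobenius hypothesis to obtain, after a short simplification, $\sum_{i<j}\lambda_i\lambda_j \geq (\lambda_0 - n/2)^2$. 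The elementary inequality $\sum_k \sqrt{x_k} \geq \sqrt{\sum_k x_k}$, valid for $x_k \geq 0$ and obtained by squaring, then converts this to $\sum_{i<j}\sqrt{\lambda_i\lambda_j} \geq \lambda_0 - n/2$. Assembling the pieces gives $(\sum_{j\geq 1}\sqrt{\lambda_j})^2 \geq (n-\lambda_0) + 2(\lambda_0 - n/2) = \lambda_0$, exactly as required.

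The subtle point is that bounds on $\sum_{j\geq 1}\sqrt{\lambda_j}$ which use only $\sum \lambda_j^2 \leq n^2/2$ and $\sum\lambda_j = n$ (for example, via Cauchy--Schwarz applied directly to $\sum_{j\geq 1} \sqrt{\lambda_j}$, or via the chain $\sum\sqrt{\lambda_j}\cdot\sum\lambda_j^{3/2} \geq (\sum\lambda_j)^2$) turn out to be too lossy when $\lambda_0$ lies in a middle range such as $n/2 < \lambda_0 < 2n/3$. Expanding the square and exploiting the identity relating power sums and elementary symmetric polynomials of $\lambda_1,\ldots,\lambda_{n-1}$ is what extracts the full strength of the Frobenius bound. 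I expect the result to be tight: the inequality $\sum_k\sqrt{x_k} \geq \sqrt{\sum_k x_k}$ is saturated when only one $x_k$ is nonzero, which corresponds to the eigenvalue configuration $(n/2, n/2, 0, \ldots, 0)$ that also saturates $\|G\|_{\textup{F}} = n/\sqrt{2}$.
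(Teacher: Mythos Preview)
Your proof is correct and follows essentially the same route as the paper: both arguments reduce to Theorem~\ref{thm:antidist_by_eigs}, expand $\bigl(\sum_{j\ge 1}\sqrt{\lambda_j}\bigr)^2$, use the power-sum/elementary-symmetric identity together with the Frobenius hypothesis to bound $\sum_{i<j}\lambda_i\lambda_j$ from below, and then apply the inequality $\sum_k\sqrt{x_k}\ge\sqrt{\sum_k x_k}$. The only cosmetic differences are that the paper normalizes via $x_j=\sqrt{\lambda_j/n}$ and absorbs your case split $\lambda_0\lessgtr n/2$ into a single square-root step (where the case $\lambda_0\le n/2$ corresponds to the squared quantity on the left being automatically nonpositive).
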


It is perhaps worth noting that for all Gram matrices $G$ we have $\sqrt{n} \leq \|G\|_{\textup{F}} \leq n$, with the lower bound being saturated when the states in the set are mutually orthogonal and the upper bound being saturated when the states in the set are all equal to each other. Corollary~\ref{cor:antidist_by_frob} thus says that if the states in a set are ``close enough'' to being mutually orthogonal, they must be antidistinguishable.

\begin{proof}[Proof of Corollary~\ref{cor:antidist_by_frob}]
    Define $x_j = \sqrt{\lambda_j/n}$ for all $0 \leq j \leq n-1$, so that the conditions $\tr(G) = n$ and $\|G\|_{\textup{F}} \leq n/\sqrt{2}$ are equivalent to
    \begin{align}\label{eq:frob_const1}
        \sum_{j=0}^{n-1} x_j^2 = 1 \quad \text{and} \quad \sum_{j=0}^{n-1} x_j^4 \leq \frac{1}{2},
    \end{align}
    respectively. If we can show that these conditions imply $\sqrt{\lambda_0} \leq \sum_{j=1}^{n-1} \sqrt{\lambda_j}$ (i.e., $x_0 \leq \sum_{j=1}^{n-1} x_j$) then Theorem~\ref{thm:antidist_by_eigs} will imply the present corollary.
    
    To this end, we note that the constraints~\eqref{eq:frob_const1} imply $x_0^2 = 1 - \sum_{j=1}^{n-1} x_j^2$ and thus
    \begin{align*}
        \left(1 - \sum_{j=1}^{n-1} x_j^2\right)^2 + \sum_{j=1}^{n-1} x_j^4 \leq \frac{1}{2}.
    \end{align*}
    Multiplying through by $2$ and rearranging slightly shows that the above inequality is equivalent to
    \[
        \left(1 - 2\sum_{j=1}^{n-1} x_j^2\right)^2 \leq 4\sum_{\substack{i,j=1\\ i > j}}^{n-1}x_{i}^2x_{j}^2.
    \]
    Using the fact that $1 = \sum_{j=0}^{n-1} x_j^2$ on the left-hand-side, and then square-rooting both sides (noting that the right-hand-side is non-negative, so the direction of the inequality is preserved) then shows that
    \begin{align}\label{ineq:frob_proof_string}
        x_0^2 - \sum_{j=1}^{n-1} x_j^2 \leq 2\sqrt{\sum_{\substack{i,j=1\\
        i > j}}^{n-1}x_{i}^2x_{j}^2} \leq 2\sum_{\substack{i,j=1\\
        i > j}}^{n-1}x_{i}x_{j},
    \end{align}
    where the second inequality follows since the $2$-norm is at most the $1$-norm. In particular, the outermost inequality in~\eqref{ineq:frob_proof_string} can be rearranged as
    \begin{align*}
        x_0^2 & \leq \sum_{j=1}^{n-1} x_j^2 + 2\sum_{\substack{i,j=1\\
        i > j}}^{n-1}x_{i}x_{j},
    \end{align*}
    which can be factored as
    \begin{align*}
        x_0^2 & \leq \left(\sum_{j=1}^{n-1} x_j\right)^2.
    \end{align*}
    This implies $x_0 \leq \sum_{j=1}^{n-1} x_j$, as desired.
\end{proof}

Our final sufficient condition for antidistinguishability arises simply by noting that if each off-diagonal entry of a Gram matrix $G$ has $|g_{i,j}| \leq \sqrt{(n-2)/(2n-2)}$, then $\|G\|_{\textup{F}} \leq n/\sqrt{2}$, so Corollary~\ref{cor:antidist_by_frob} tells us that the set is antidistinguishable. In other words, we have the following corrected version of the conjecture from \cite{havlivcek2020simple}:

\begin{corollary}\label{cor:antidist_by_IP}
    Let $n \geq 2$ be an integer and let $S = \{\ket{\psi_0},\ket{\psi_1},\ldots,\ket{\psi_{n-1}}\}$. If
    \begin{align}\label{eq:correct_inequality}
        \big|\braket{\psi_i}{\psi_j}\big| \leq \frac{1}{\sqrt{2}}\sqrt{\frac{n-2}{n-1}} \quad \text{for all} \quad 0 \leq i \neq j \leq n-1
    \end{align}
    then $S$ is antidistinguishable.
\end{corollary}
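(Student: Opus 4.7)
The plan is to deduce this corollary directly from Corollary~\ref{cor:antidist_by_frob}, which reduces the problem to bounding the Frobenius norm of the Gram matrix $G$. The inequality~\eqref{eq:correct_inequality} is exactly tailored so that the off-diagonal entries are small enough to force $\|G\|_{\textup{F}} \leq n/\sqrt{2}$.

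Concretely, I would first split $\|G\|_{\textup{F}}^2 = \sum_{i,j} |G_{i,j}|^2$ into diagonal and off-diagonal contributions. The diagonal entries are all equal to $1$ (since the states are unit vectors), contributing exactly $n$. There are $n(n-1)$ off-diagonal entries, each of which by hypothesis satisfies $|\langle\psi_i|\psi_j\rangle|^2 \leq \tfrac{1}{2}\cdot\tfrac{n-2}{n-1}$. Multiplying these bounds together shows that the off-diagonal part contributes at most $\tfrac{n(n-2)}{2}$, giving $\|G\|_{\textup{F}}^2 \leq n + \tfrac{n(n-2)}{2} = \tfrac{n^2}{2}$, i.e., $\|G\|_{\textup{F}} \leq n/\sqrt{2}$. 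Corollary~\ref{cor:antidist_by_frob} then immediately implies that $S$ is antidistinguishable.

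There is essentially no obstacle here; the constant $\tfrac{1}{\sqrt{2}}\sqrt{\tfrac{n-2}{n-1}}$ on the right-hand side of~\eqref{eq:correct_inequality} is precisely reverse-engineered so that the naive bound $n(n-1)\cdot\tfrac{1}{2}\cdot\tfrac{n-2}{n-1}$ on the off-diagonal sum collapses to $\tfrac{n(n-2)}{2}$. The only conceptual content was already packed into Corollary~\ref{cor:antidist_by_frob}, and this result is just its most convenient form in terms of inner products rather than a global norm.
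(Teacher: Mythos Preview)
Your proposal is correct and matches the paper's own argument essentially verbatim: the paper also just observes that the bound on each $|g_{i,j}|$ forces $\|G\|_{\textup{F}}^2 \leq n + n(n-1)\cdot\tfrac{n-2}{2(n-1)} = n^2/2$ and then invokes Corollary~\ref{cor:antidist_by_frob}.
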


\section{Tightness of these bounds}\label{sec:tightness}

We already noted that the bounds of Section~\ref{sec:upper_bounds} are tight, as demonstrated by Example~\ref{exam:equi_angle} (which is circulant). The bound of Theorem~\ref{thm:antidist_by_eigs} is also tight, as shown by the fact that it is both necessary and sufficient for circulant matrices. Corollary~\ref{cor:antidist_by_frob} can also be seen to be tight via circulant matrices: if $\varepsilon > 0$ is small and $G$ is a circulant Gram matrix with eigenvalues $\lambda_0 = n/2+\varepsilon$, $\lambda_1 = n/2-\varepsilon$, and $\lambda_j = 0$ for $j \geq 2$, then
\[
    \sum_{j=1}^{n-1}\sqrt{\lambda_j} = \sqrt{\frac{n}{2} - \varepsilon} < \sqrt{\frac{n}{2} + \varepsilon} = \sqrt{\lambda_0},
\]
so any set of pure states with Gram matrix $G$ must not be antidistinguishable, by Theorem~\ref{thm:antidist_by_eigs}. However, in this case, we also have
\[
    \|G\|_{\textup{F}} = \sqrt{\left(\frac{n}{2} + \varepsilon\right)^2 + \left(\frac{n}{2} - \varepsilon\right)^2} = \sqrt{\frac{n^2}{2} + 2\varepsilon^2} \leq \frac{n}{\sqrt{2}} + \sqrt{2} \varepsilon,
\]
demonstrating that the quantity $n/\sqrt{2}$ in Corollary~\ref{cor:antidist_by_frob} cannot be increased at all.

The only question remaining is whether or not the bound established by Corollary~\ref{cor:antidist_by_IP} is also tight. It is trivially tight when $n = 2$ or $n = 3$ since it matches the corresponding necessary condition provided by Corollary~\ref{cor:large_ip_not_anti}. The situation is less clear when $n \geq 4$, since when $n = 4$, for example, we have the following situation:

\begin{itemize}
    \item If $|\braket{\psi_i}{\psi_j}| > 2/3 \approx 0.6667$ for all $i \neq j$ then the set is not antidistinguishable (Corollary~\ref{cor:large_ip_not_anti}).
    
    \item If $|\braket{\psi_i}{\psi_j}| \leq 2/3$ for all $i \neq j$ then the set \emph{may} be antidistinguishable (Example~\ref{exam:equi_angle}), so Corollary~\ref{cor:large_ip_not_anti} is tight.
    
    \item If $|\braket{\psi_i}{\psi_j}| \leq 1/\sqrt{3} \approx 0.5774$ for all $i \neq j$ then the set is antidistinguishable (Corollary~\ref{cor:antidist_by_IP}).
    
    \item It is currently only known that the set \emph{may} not be antidistinguishable when $|\braket{\psi_i}{\psi_j}| > 0.6451$ \cite{russo2023inner}. Our next example improves this bound to $1/\sqrt{3}$, thus showing that Corollary~\ref{cor:antidist_by_IP} is tight, at least when $n = 4$:
\end{itemize}

\begin{example}\label{ex:d4_lb_tight}
    Let $c = 1/\sqrt{3}$ and consider the matrix
    \[
        G := \begin{bmatrix}
            1 & c & c & c \\
            c & 1 & ci & (1 + ci)/2 \\
            c & -ci & 1 & (1 - ci)/2 \\
            c & (1 - ci)/2 & (1 + ci)/2 & 1
        \end{bmatrix}.
    \]
    It is straightforward to check that $G$ is positive semidefinite and is thus the Gram matrix of some set $S$ of $n = 4$ pure states. Since $|g_{i,j}| = 1/\sqrt{3}$ for all $i \neq j$, Corollary~\ref{cor:antidist_by_IP} tells us that $S$ is antidistinguishable.
    
    Now let $\varepsilon > 0$ be small, let $\mathbf{v} = [1, (-\sqrt{3} + i)/2, (-\sqrt{3} - i)/2, 0]^T$ and $\mathbf{w} = [0,0,0,1]^T$, and define
    \[
        G_{\varepsilon} = \frac{1}{1 - 2\varepsilon}\big(G + \varepsilon (\mathbf{v}\mathbf{v}^* + \mathbf{w}\mathbf{w}^* - 3I)\big).
    \]
    A straightforward computation shows that $G_{\varepsilon}$ is positive semidefinite if $0 < \varepsilon < 1/10$ and its diagonal entries all equal $1$, so it is the Gram matrix of some set $S_\varepsilon$ of $n = 4$ pure states. Furthermore, $\lim_{\varepsilon \rightarrow 0^{+}}G_{\varepsilon} = G$, so the inner products of the members of $S_{\varepsilon}$ can be made to have modulus as close to $1/\sqrt{3}$ as we like by choosing $\varepsilon > 0$ sufficiently small.

    We claim that $G_{\varepsilon}$ is not $(n-1) = 3$-incoherent, so (by Theorem~\ref{thm:n_1_incoh}), $S_{\varepsilon}$ is not antidistinguishable. To verify this claim, we need to find a $3$-locally PSD matrix $X$ for which $\tr(XG_{\varepsilon}) < 0$. To this end, consider the matrices
    \begin{align*}
        Y & = \begin{bmatrix}
            2 & -\sqrt{3} - i & -\sqrt{3} + i & 0 \\
            -\sqrt{3} + i & 2 & 1 - \sqrt{3}i & 0 \\
            -\sqrt{3} - i & 1 + \sqrt{3}i & 2 & 0 \\
            0 & 0 & 0 & 0
        \end{bmatrix} \quad \text{and} \\
        Z & = \begin{bmatrix}
            0 & 1 + \sqrt{3}i & 1 - \sqrt{3}i & -2 \\
            1 - \sqrt{3}i & 0 & -\sqrt{3} + i & -\sqrt{3} - i \\
            1 + \sqrt{3}i & -\sqrt{3} - i & 0 & -\sqrt{3} + i \\
            -2 & -\sqrt{3} + i & -\sqrt{3} - i & 2\sqrt{3}(1 + 5\varepsilon)
        \end{bmatrix}.
    \end{align*}
    The following claims are all straightforward (albeit somewhat tedious) to verify:
    \begin{itemize}
        \item $\tr(YG_\varepsilon) = 0$ for all $0 < \varepsilon < 1/2$.
        
        \item $\tr(ZG_\varepsilon) = -(20\sqrt{3})\varepsilon^2/(1-2\varepsilon) < 0$ for all $0 < \varepsilon < 1/2$.
        
        \item The matrix $X = Y + \delta Z$ is $3$-locally PSD when $0 < \varepsilon < 1/2$ and $0 < \delta \leq 5\sqrt{3}\varepsilon/(1 + 5\varepsilon)$.

    \end{itemize}
    It follows that, for these choices of $\delta$ and $\varepsilon$, $X$ is a $3$-locally positive semidefinite matrix with
    \[
        \tr(XG_{\varepsilon}) = -(20\sqrt{3})\delta\varepsilon^2/(1-2\varepsilon) < 0,
    \]
    proving our claim.
\end{example}

We summarize the above example and related theorems in Figure~\ref{fig:antidist_ip_range}.

\begin{figure}[!htp]
    \begin{center}
    	\begin{tikzpicture}[scale=2.25]
    		\draw[gray!35] (1.8,0.25) -- (7.4,0.25);
    		\draw[gray!35] (1.8,0.5) -- (7.4,0.5);
    		\draw[gray!35] (1.8,0.75) -- (7.4,0.75);
    		\draw[gray!35] (1.8,1) -- (7.4,1);
    		
    		\draw[thick,->] (1.8,0) -- (7.4,0)node[anchor=west]{$n$};
    		\draw[thick,->] (1.8,0) -- (1.8,1.2)node[anchor=south]{$\iprod{\psi_i}{\psi_j}$};

    		\draw[thick] (2,-0.03)node[anchor=north]{$2$} -- (2,0.03);
    		\draw[thick] (3,-0.03)node[anchor=north]{$3$} -- (3,0.03);
    		\draw[thick] (4,-0.03)node[anchor=north]{$4$} -- (4,0.03);
    		\draw[thick] (5,-0.03)node[anchor=north]{$5$} -- (5,0.03);
    		\draw[thick] (6,-0.03)node[anchor=north]{$6$} -- (6,0.03);
    		\draw[thick] (7,-0.03)node[anchor=north]{$7$} -- (7,0.03);
    		
    		\draw[thick] (1.77,0)node[anchor=east]{$0$} -- (1.83,0);
    		\draw[thick] (1.77,0.25)node[anchor=east]{$0.25$} -- (1.83,0.25);
    		\draw[thick] (1.77,0.5)node[anchor=east]{$0.5$} -- (1.83,0.5);
    		\draw[thick] (1.77,0.75)node[anchor=east]{$0.75$} -- (1.83,0.75);
    		\draw[thick] (1.77,1)node[anchor=east]{$1$} -- (1.83,1);
    		
    		\draw[black,double,thick] (7,0) -- (7,0.645497);
    		\draw[black] (7,1) -- (7,0.8333333);
    		\draw[black,double,thick] (6,0) -- (6,0.632456);
    		\draw[black] (6,1) -- (6,0.8);
    		\draw[black,double,thick] (5,0) -- (5,0.612372);
    		\draw[black] (5,1) -- (5,0.75);
    		\draw[black,double,thick] (4,0) -- (4,0.57735);
    		\draw[black] (4,1) -- (4,0.6666666);
    		\draw[black,double,thick] (3,0) -- (3,0.5);
    		\draw[black] (3,1) -- (3,0.5);
    		\draw[black] (2,1) -- (2,0);

    		\filldraw[black] (2,0) circle (1pt);
    		\filldraw[black] (3,0.5) circle (1pt);
    		\filldraw[draw=black,fill=white] (4,0.6666666) circle (1pt);
    		\filldraw[draw=black,fill=white] (5,0.75) circle (1pt);
    		\filldraw[draw=black,fill=white] (6,0.8) circle (1pt);
    		\filldraw[draw=black,fill=white] (7,0.8333333) circle (1pt);
    		\filldraw[black] (4,0.57735) circle (1pt);
    		\filldraw[black] (5,0.612372) circle (1pt);
    		\filldraw[black] (6,0.632456) circle (1pt);
    		\filldraw[black] (7,0.645497) circle (1pt);
    	\end{tikzpicture}
	\end{center}
	\caption{How inner products between $n$ pure states determine their antidistinguishability. If all inner products are on or below the filled-in circles, then the states are antidistinguishable (Corollary~\ref{cor:antidist_by_IP}), and if all inner products are strictly above the hollow circles, then the states are not antidistinguishable (Corollary~\ref{cor:large_ip_not_anti}). In between those circles, the states might be antidistinguishable (Example~\ref{exam:equi_angle}), and when $n = 4$ at least they might also be not antidistinguishable (Example~\ref{ex:d4_lb_tight}).}\label{fig:antidist_ip_range}
\end{figure}

We still do not know whether or not Corollary~\ref{cor:antidist_by_IP} is tight when $n \geq 5$. The difficulty here is that the circulant matrices that we used to show that Corollary~\ref{cor:antidist_by_frob} is tight only have the property that their off-diagonal entries all have absolute value equal to each other when $n \leq 3$. When $n \geq 4$ we must explore non-circulant matrices like the one from Example~\ref{ex:d4_lb_tight}, and this seems much more difficult. 

\bigskip 
\noindent \textbf{Software.}  
Companion software that implements the SDPs from Equation~\eqref{eq:antidist-sdp} in addition to Examples~\ref{exam:trine}, \ref{exam:equi_angle}, and~\ref{ex:d4_lb_tight} can be found at the GitHub repository~\cite{russo2023circulant}. This repository contains Python code that makes use of the toqito quantum information package~\cite{russo2021toqito} as well as the PICOS package~\cite{sagnol2022picos} which invokes the CVXOPT solver~\cite{andersen2020cvxopt} for solving the SDPs.

\bigskip 
\noindent \textbf{Acknowledgements.} 
The authors thank Mark Hamilton, who provided a key insight that significantly simplified the proof of Corollary~\ref{cor:antidist_by_frob}, as well as Debbie Leung, Robert Spekkens, Iman Marvian, and Vojt\v{e}ch Havl\'{i}\v{c}ek for helpful conversations. N.J.\ was supported by NSERC Discovery Grant RGPIN-2022-04098. 
J.S. is partially supported by
Commonwealth Cyber Initiative SWVA grant 467489.  

\bibliographystyle{quantum} 
\bibliography{main} 

\end{document}